\documentclass[a4paper,12pt]{article}
\usepackage[margin=1.1in]{geometry}
\usepackage[backend=biber,doi=false,url=false,style=apa,sortcites=false]{biblatex}
\usepackage[titletoc, title]{appendix}
\usepackage[colorlinks,citecolor=blue,urlcolor=blue,linkcolor=blue]{hyperref}
\usepackage{tikz}
\usepackage{amssymb} 
\usepackage{kbordermatrix} 
\usepackage{amsmath} 
\usepackage{mathtools} 
\usepackage{amsthm} 
\usepackage{gensymb} 
\usepackage[skip=10pt,font=small,labelfont=bf]{caption} 
\usepackage{subcaption} 
\usepackage{enumitem} 
\usepackage{pdfpages}

\DeclareMathOperator*{\argmax}{arg\,max}

\DeclareMathOperator{\I}{I}
\DeclareMathOperator{\II}{II}
\DeclareMathOperator{\III}{III}

\theoremstyle{definition}
\newtheorem{axiom}{Axiom}
\newtheorem{proposition}{Proposition}
\newtheorem{definition}{Definition}
\newtheorem{lemma}{Lemma}
\newtheorem{theorem}{Theorem}

\newtheorem{axiomalt}{Axiom}[axiom]
\newenvironment{axiomp}[1]{
  \renewcommand\theaxiomalt{#1}
  \axiomalt
}{\endaxiomalt}

\renewcommand{\kbldelim}{(}
\renewcommand{\kbrdelim}{)}

\makeatletter
\def\@fnsymbol#1{\ensuremath{\ifcase#1\or *\or\ddagger\or
   \mathsection\or \mathparagraph\or \|\or **\or \dagger \else\@ctrerr\fi}}
\makeatother

\title{Putting Context into Preference Aggregation\thanks{We thank Hervé Moulin and Yves Sprumont for helpful comments. We gratefully acknowledge financial support from the \textit{Jubiläumsfonds} of the Austrian National Bank (OeNB), project number 18719.}}
\author{Philipp Peitler\thanks{University of Vienna; philipp.peitler@univie.ac.at} \and Karl H. Schlag\thanks{University of Vienna; karl.schlag@univie.ac.at}}
\date{April 30, 2024 \\ \href{https://philipppeitler.github.io/mywebsite/context.pdf}{[Latest version here]}}
\begin{document}
\maketitle

\begin{abstract}
The axioms underlying Arrow's impossibility theorem are very restrictive in terms of what can be used when aggregating preferences. Social preferences may not depend on the menu nor on preferences over alternatives outside the menu. But context matters. So, we weaken these restrictions to allow for context to be included. The context, as we define, describes which alternatives in the menu and which preferences over alternatives outside the menu matter. We obtain unique representations. These are discussed in examples involving markets, the intertemporal well-being of an individual, and bargaining.
\end{abstract}

\noindent {\bf Keywords:} social choice, preference aggregation, relative utilitarianism, menu-independence, IIA, rational preferences.
\newpage
\section{Introduction}

A central question in economics is how a social planner should compare alternatives when the preferences of the individuals involved are in conflict. To determine what is best for those individuals involved, the planner needs a rule that aggregates individual preferences into a single social preference. Different strands of the economic literature use very different rules. In the analysis of markets, consumer surplus is used to evaluate the joint welfare of consumers. In the analysis of dynamically inconsistent decision makers, intertemporal welfare is evaluated either by the Pareto criterion or by the long-run utility \autocite{odonoghue1999doing}. In bargaining, the Nash bargaining solution \autocite{nash1950bargaining} and the Kalai-Smorodinsky bargaining solution \autocite{kalai1975other} are prominent. Note here that bargaining problems can be viewed as the search for the best allocation after aggregating preferences. Even within their respective settings, these criteria can only be applied under narrow modeling assumptions. More importantly, each of them fails to conform to a minimal set of desiderata. Consumer surplus violates the neutrality axiom, as it is sensitive to the labels of social alternatives. The Pareto criterion does not generate complete preferences, and long-run utility violates the Pareto principle. Neither the Nash nor the Kalai-Smorodinsky bargaining solution can be interpreted as resulting from the most preferred allocation under a social preference relation that satisfies the \textcite{vonneumann1944theory} axioms. Some of the above violations are not obvious and will be demonstrated. In each of these settings, we could search separately for other rules. However, we feel that the well-being of individuals should be traded off based on principles that are appealing independently of the application. Therefore, we would like to have a universal rule that can be applied to all settings. In light of \textcite{arrow1950difficulty,arrow1963social}, it is not clear whether such a rule exists. In the following, we argue that some of Arrow's demands are too stringent and should be relaxed. Before we state our contentions, let us briefly revisit Arrow's theorem.

\textcite{arrow1963social} shows that there is no aggregation rule that satisfies completeness, transitivity, the Pareto principle, non-dictatorship, and \textit{independence of irrelevant alternatives} (IIA). If we want the social planner to be rational, benevolent, and impartial, then the first four axioms of Arrow serve as minimal requirements. Consider now the fifth and last axiom, IIA. According to Arrow, IIA demands that when the planner chooses from a menu of alternatives denoted by $S$, the choice $C(S)$ is not allowed to depend on individual preferences over alternatives outside $S$. In addition to the above, Arrow assumes that social choice from any menu is made according to a single social preference over all alternatives. Hence, Arrow implicitly assumes one more axiom: \textit{menu independence} (MI). Together, these axioms make it impossible to aggregate preferences. Weakening either IIA or MI is a potential avenue to escape the impossibility. We now argue that one can weaken either of them as neither is as desirable as the first four axioms mentioned above.

First, we assess IIA with the aid of the following example by \textcite{pearce2021individual}. Imagine being tasked with choosing between two social alternatives, $x$ and $y$, for a group of five kindergarten children. Strict preferences of the children are given in Figure \ref{ch1:fig:ordinal}. \begin{figure}[h]
\center
\begin{tabular}{ c c c c c }
 1 & 2 & 3 & 4 & 5\\ 
 x & x & x & x & y\\  
 y & y & y & y & x\\   
\end{tabular}
\caption{Children's ordinal preferences.}
\label{ch1:fig:ordinal}
\end{figure}
In isolation, one would be inclined to choose $x$ over $y$ as it is preferred by four of the five children. However, we also learn that under $x$, each of the first four children gets 1001 toys, while under $y$, each of them only gets 1000 toys. Moreover, we are told that the fifth child has a fatal illness that would be cured under $y$, whereas under $x$, the fifth child dies a long and terrifying death. Unquestionably, this additional information would change our preference to $y$ over $x$. Pearce, therefore, concludes that information besides individual preferences between $x$ and $y$ must be relevant for social choice. But which information exactly? Let us embed the comparison between $x$ and $y$ into a social choice problem in which, for each of the first four children, there exists an alternative $z_i$ where only this child $i$ dies and every other child gets 1001 toys. While the alternatives $z_1$ to $z_4$ aren't currently feasible (i.e., they are not in the menu), they are still possible (in the sense of conceivable) and could be in the menu in a different situation. Assume that Figure \ref{ch1:fig:cardinal} depicts the von Neumann-Morgenstern preferences of the children over all these alternatives. \begin{figure}[t]
\center
\begin{tikzpicture}
\draw (0,0)--(0,4);
\node[above] at (0,4) {1};
\node[circle,draw,inner sep=1.2,fill=black] at (0,0.3) {};
\node[left] at (0,0.3) {$z_1$};
\node[circle,draw,inner sep=1.2,fill=black] at (0,3.7) {};
\node[left] at (0,3.7) {$z_{i\neq 1}$,$x$};
\node[circle,draw,inner sep=1.2,fill=black] at (0,3.5) {};
\node[right] at (0,3.5) {$y$};

\draw (2,0)--(2,4);
\node[above] at (2,4) {2};
\node[circle,draw,inner sep=1.2,fill=black] at (2,0.3) {};
\node[left] at (2,0.3) {$z_2$};
\node[circle,draw,inner sep=1.2,fill=black] at (2,3.7) {};
\node[left] at (2,3.7) {$z_{i\neq 2}$,$x$};
\node[circle,draw,inner sep=1.2,fill=black] at (2,3.5) {};
\node[right] at (2,3.5) {$y$};

\draw (4,0)--(4,4);
\node[above] at (4,4) {3};
\node[circle,draw,inner sep=1.2,fill=black] at (4,0.3) {};
\node[left] at (4,0.3) {$z_3$};
\node[circle,draw,inner sep=1.2,fill=black] at (4,3.7) {};
\node[left] at (4,3.7) {$z_{i\neq 3}$,$x$};
\node[circle,draw,inner sep=1.2,fill=black] at (4,3.5) {};
\node[right] at (4,3.5) {$y$};

\draw (6,0)--(6,4);
\node[above] at (6,4) {4};
\node[circle,draw,inner sep=1.2,fill=black] at (6,0.3) {};
\node[left] at (6,0.3) {$z_4$};
\node[circle,draw,inner sep=1.2,fill=black] at (6,3.7) {};
\node[left] at (6,3.7) {$z_{i\neq 4}$,$x$};
\node[circle,draw,inner sep=1.2,fill=black] at (6,3.5) {};
\node[right] at (6,3.5) {$y$};

\draw (8,0)--(8,4);
\node[above] at (8,4) {5};
\node[circle,draw,inner sep=1.2,fill=black] at (8,0.3) {};
\node[left] at (8,0.3) {$x$};
\node[circle,draw,inner sep=1.2,fill=black] at (8,3.7) {};
\node[left] at (8,3.7) {$z_i$};
\node[circle,draw,inner sep=1.2,fill=black] at (8,3.5) {};
\node[right] at (8,3.5) {$y$};
\end{tikzpicture}
\caption{Children's cardinal utility.}
\label{ch1:fig:cardinal}
\end{figure} The inclusion of alternatives $z_1$ to $z_4$ provides a context that puts the alternatives $x$ and $y$ into perspective. Adding this context, it becomes clear that going from $y$ to $x$ represents a marginal improvement for the first four children at the expense of a substantial loss to the fifth child. IIA demands that we ignore this additional information, which seems unreasonable. In order to include the context provided by the infeasible alternatives, IIA has to be weakened.

Next, we assess MI. To do this, consider the following example by \textcite{sen1993internal}. An individual faces the menu $\{x,y\}$ where $y$ means taking the last remaining apple from the fruit basket at the dinner table, and $x$ means taking nothing instead. Compare this to the situation where there is a second apple in the basket, such that the menu is $\{x,y,z\}$ where $z$ means taking the other apple. One can plausibly prefer $x$ over $y$ under the first menu and $y$ over $x$ under the second. Similarly, our understanding of fairness in a social choice setting might depend on what is currently feasible. When comparing two social alternatives, $x$ and $y$, information about the feasibility of other alternatives provides a context that helps with the evaluation. In order to make use of that information, MI has to be weakened.

Above we argued that neither IIA nor MI is desirable, as they ignore additional information that can help in the assessment of social alternatives. Weakening either of these conditions opens the possibility for preference aggregation. We are looking for aggregation rules that abide by the von Neumann-Morgenstern axioms, satisfy the strong Pareto condition, and are anonymous. These axioms strengthen Arrow's minimal requirements for rationality, benevolence, and impartiality. We do not want to drop IIA and MI completely, as we wish to limit the information that can be used by the social planner. We, therefore, define two weaker axioms that identify when two social choice problems provide the same relevant information and thus yield the same social preferences. These axioms implicitly define the context as the context captures what is relevant. \textit{Independence of irrelevant comparable alternatives} (\ref{ch1:IICA}) weakens IIA by requiring that preferences over alternatives outside the menu do not influence the ranking if they are comparable for each individual to other alternatives. Specifically, an alternative is comparable to others if each individual is indifferent to some mixture of these other alternatives. \textit{Menu independence of comparable alternatives} (\ref{ch1:MICA}) weakens MI by allowing alternatives to be dropped from the menu if they are comparable to other alternatives in the menu. Each of these axioms, together with our other axioms, uniquely defines an aggregation rule that can be applied to any social choice or aggregation problem. Both of our representations are relative utilitarian, meaning the social welfare of an alternative is identified as the sum of individual utilities. When we weaken MI and maintain IIA, individual utilities are normalized relative to the menu. When we weaken IIA and maintain MI, individual utilities are normalized relative to the set of all possible alternatives within a given setting. Normalizing means setting the utility of the worst alternative in the respective set to 0 and the utility of the best alternative to 1. In a later section, we then use our representations to quantify consumer welfare and the well-being of a dynamically inconsistent decision maker and to derive a fair bargaining solution.

\subsection{Literature}

In the tradition of \textcite{arrow1950difficulty,arrow1963social}, we axiomatize relative utilitarianism in a \textit{multi-profile} setting, meaning that at least one axiom relates social preferences across different profiles of individual preference (e.g., \ref{ch1:IIA} or \ref{ch1:IICA}). Previous characterizations of relative utilitarianism in this setting are \textcite{dhillon1999relative}, \textcite{borgers2017revealed} and \textcite{marchant2019utilitarianism}. \textcite{dhillon1999relative} implicitly assumes menu independence and characterizes an aggregation rule that is the same as the one we obtain when relaxing IIA. They have a bottom-up approach, as their implicit objective is to present the weakest possible axioms that characterize relative utilitarianism. In contrast, we have a top-down approach, as we stay close to \textcite{arrow1963social} and wish to present the strongest axioms that don't result in an impossibility. These two different approaches result in distinct axiomatic systems. \textcite{dhillon1999relative} has a higher level of technical sophistication, which is exemplified by their continuity axiom and intricate proof. In contrast, our axioms have a straightforward interpretation, and our proof is much simpler. Note also that \textcite{dhillon1999relative} does not provide a representation when some individuals have either identical or opposing preferences. \textcite{borgers2017revealed} takes a revealed preference approach and imposes axioms on the planner's revealed marginal rate of substitution between individuals' normalized utilities. \textcite{marchant2019utilitarianism} axiomatizes a relative utilitarian choice correspondence rather than a social preference.

In a \textit{single-profile} setting, meaning without axioms relating different profiles, relative utilitarianism has been axiomatized by \textcite{karni1998impartiality}, \textcite{segal2000let}, and \textcite{karni2024impartiality}.

We build on a large body of literature that deals with the aggregation of vNM preferences. The seminal paper is \textcite{harsanyi1955cardinal}, which shows that vNM rationality of society and a Pareto axiom implies linear aggregation of utilities. Many extensions and variations of this theorem followed \autocite{domotor1979ordered, fishburn1984harsanyi, border1985more, selinger1986harsanyi, coulhon1989sociala, hammond1992harsanyi, weymark1993harsanyi, mongin1994harsanyi, weymark1994harsanyi, demeyer1995note, zhou1997harsanyi, blackorby1999harsanyi, mandler2005harsanyi}. While \textcite{harsanyi1955cardinal} shows that social welfare must be a weighted sum of individual utilities, he does not identify the weights. We start from his theorem and impose additional axioms to pin down these weights. Another paper on the aggregation of vNM preferences worthy of mention is \textcite{sprumont2013relative}, which drops the assumption that social preferences are vNM and characterizes the relative egalitarian aggregation rule. Our axiom \ref{ch1:MICA} is quite close to the ``independence of inessential expansions'' axiom in \textcite{sprumont2013relative}.

Closely related to the aggregation of risk preferences is the aggregation of subjective expected utility (SEU) preferences over uncertain acts. Since beliefs regarding the state can differ across individuals, not only conflicting tastes but also conflicting probability judgments have to be aggregated. This additional complication leads to an impossibility result, even in the single-profile setting. \textcite{mongin1995consistent, mongin1998paradox} shows that society's preferences cannot simultaneously satisfy the Pareto condition and have a SEU representation. To escape this impossibility, utilitarian aggregation rules either weaken the rationality postulates underlying the SEU representation \autocite{sprumont2019relative} or the Pareto condition \autocite{gilboa2004utilitarian, brandl2021beliefaveraging, dietrich2021fully}. Most closely related is \textcite{sprumont2019relative}, which allows the social preferences to depend on some reference set of relevant outcomes. This is mathematically similar to our approach, where social preferences can depend on the menu. However, Sprumont's axioms don't tell us what is relevant, as the relevant set is exogenous and part of the social choice problem. Without additional restrictions, this would allow the planner to justify nearly any decision by handpicking the relevant set as needed. One could not simply reinterpret the reference set as the menu, as it is a set of outcomes, not a set of acts.

Related to relative utilitarian aggregation rules is the literature on \textit{range voting} \autocite{smith2000range, gaertner2012general, pivato2014formal, mace2018voting}. Under range voting, a voter assigns a score between 0 and 1 to each candidate, and the candidate with the highest sum of scores is selected as the winner.

We also mention a few of the many other papers on axiomatizations of aggregation rules that start similarly to us from \textcite{arrow1950difficulty,arrow1963social} and relax axioms therein. Notably, \textcite{sen1993internal} drops MI and shows that impossibility still follows if IIA is replaced by \textit{binary IIA}.\footnote{Binary IIA says that the social preference between any two alternatives $x$ and $y$ can only depend on individual preferences between $x$ and $y$. The literature often uses the notion of IIA and binary IIA interchangeably. Note however that binary IIA is only implied by IIA if MI is assumed as well.} To our knowledge, \textcite{sen1993internal} is the only previous paper that points out the implicit MI axiom in \textcite{arrow1950difficulty,arrow1963social}. \textcite{saari1998connecting} and \textcite{maskin2022arrow} relax IIA by allowing the comparison between two alternatives to depend on how many other alternatives lie in between them. 

\section{Axiomatization}\label{ch1:secaxioms}

There is a society consisting of $n$ individuals, where $n\in\mathbb{N}$. The set of individuals is denoted by $N:=\{1,...,n\}$. Furthermore, there is a set of possible alternatives $A$, where $A$ is finite.\footnote{In Appendix \ref{ch1:appendix infinite alternatives} we consider the case where $A$ is infinite.} Each individual in society has rational preferences over the possible alternatives. Formally, let $\triangle A$ denote the set of lotteries over $A$ and let $\mathcal{R}$ denote the set of logically possible von Neumann-Morgenstern (vNM) preferences over $\triangle A$. Individual preferences are then captured by a preference profile $R\in \mathcal{R}^n$. For a given profile $R$, we sometimes write $\succcurlyeq_i^R$ to denote the $i$'th element in $R$. In any given situation, only a subset $S$ of the alternatives in $A$ is feasible and could be implemented by society. We call $S$ the \textit{menu}. For any $S\subseteq A$ and $R\in \mathcal{R}^n$, we call $(S,R)$ the society's \textit{state}, and we denote by $\Omega$ the set of all logically possible states. An \textit{aggregation rule} $\succcurlyeq_*$ assigns to each state $(S,R) \in \Omega$ a binary relation $\succcurlyeq_*^{(S,R)}$ over $\triangle S$. For exposition purposes we act as if there was a social planner that employs an aggregation rule to evaluate the alternatives in the menu. Hence, for a given state $(S,R)$ we refer to $\succcurlyeq_*^{(S,R)}$ as the planner's \textit{evaluation}.

We now impose axioms on how the social planner evaluates alternatives. \textit{Rationality} says that the evaluation is rational in the sense of abiding by the vNM axioms.

\begin{axiomp}{RA}[Rationality]\label{ch1:RA}
For each $(S,R)\in \Omega$, $\succcurlyeq_*^{(S,R)}$ satisfies the vNM axioms.
\end{axiomp}

Rationality is normatively desirable and strengthens Arrow's requirement that the planner's evaluation is complete and transitive. Furthermore, we believe that, since individuals are assumed to be rational, an aggregation rule should preserve this characteristic of the individuals. 

Our second axiom says that the social planner is benevolent, such that the evaluation respects the individuals' preferences whenever these are not in conflict.

\begin{axiomp}{SP}[Strong Pareto]\label{ch1:SP}
For each $(S,R)\in \Omega$ and $x,y\in \triangle S$, if $x \succcurlyeq^R_i y$ for all $i\in N$, then $x \succcurlyeq^{(S,R)}_* y$. If, in addition, $x \succ^R_i y$ for some $i\in N$, then $x \succ^{(S,R)}_* y$.
\end{axiomp}

\ref{ch1:SP} strengthens Arrow's Pareto condition.

Our third axiom is \textit{anonymity}. Anonymity says that the planner's evaluation must not depend on the individual identities but only on the preferences themselves. Hence, in a counterfactual world, where preferences are interchanged across the individuals, the planner's evaluation must be the same.

\begin{axiomp}{AN}[Anonymity]\label{ch1:AN}
For each $(S,R),(S,R')\in \Omega$, if $R'$ is a permutation of $R$, then ${\succcurlyeq^{(S,R)}_*} = {\succcurlyeq^{(S,R')}_*}$.
\end{axiomp}

\ref{ch1:AN} is an impartiality requirement and strengthens Arrow's non-dictatorship axiom.

Above we have stated our desiderata. Next, we will present the two conditions in \textcite{arrow1950difficulty,arrow1963social} that restrict what information can be used in the evaluation. The first condition is \textit{independence of irrelevant alternatives}. We say that two binary relations $\succcurlyeq$ and $\succcurlyeq'$ \textit{agree} on some set of alternatives $S$ if for any $x,y \in \triangle S$, $x \succcurlyeq y$ if and only if $x \succcurlyeq' y$. Furthermore, we say that two preference profiles $R,R' \in \mathcal{R}^n$ agree on $S$ if for each $i\in N$, $\succcurlyeq^R_i$ and $\succcurlyeq^{R'}_i$ agree on $S$.

\begin{axiomp}{IIA}[Independence of Irrelevant Alternatives]\label{ch1:IIA}
Fix $(S,R) \in \Omega$. For any $R'\in \mathcal{R}^n$, if $R$ and $R'$ agree on $S$, then ${\succcurlyeq^{(S,R)}_*}={\succcurlyeq^{(S,R')}_*}$.
\end{axiomp}

IIA says that the planner's evaluation of the menu cannot depend on individual preferences over alternatives outside the menu. Hence, in a counterfactual world, where individual preferences differ only on alternatives outside the menu, the planner's evaluation must be the same.

Arrow's second condition is that social preferences are menu-independent. Arrow assumes this implicitly, as he writes that the social choice from a menu $S$ is made based on a single preference relation over $A$, which is independent of the menu. In light of the numerous ways to formalize this notion, we have selected the following axiom, as it best aligns with our subsequent relaxation of the condition.

\begin{axiomp}{MI}[Menu Independence]\label{ch1:MI}
For each $(S,R)\in \Omega$ and $S'\subseteq S$, $\succcurlyeq^{(S,R)}_*$ and $\succcurlyeq^{(S',R)}_*$ agree on $S'$.
\end{axiomp}

Menu independence says that removing alternatives from the menu does not change the planner's evaluation of the remaining alternatives.

It is well known that Arrow's axioms lead to an impossibility. Unsurprisingly, as our first three axioms strengthen Arrow's rationality, benevolence and impartiality requirements, the above axioms lead to an impossibility as well.\footnote{Note that formally, we weaken Arrow's \textit{universal domain} condition, by assuming that individuals have vNM preferences. However, it has been shown that such a domain restriction is insufficient for escaping the impossibility. See \textcite{sen1970interpersonal}, \textcite{kalai1977aggregation}, \textcite{hylland1980aggregation}, \textcite{chichilnisky1985neumannmorgenstern}, and \textcite{dhillon1997impossibility}.}

\begin{proposition} \label{ch1:arrow}
There is no aggregation rule $\succcurlyeq_*$ that satisfies \ref{ch1:RA}, \ref{ch1:SP}, \ref{ch1:AN}, \ref{ch1:IIA} and \ref{ch1:MI}.
\end{proposition}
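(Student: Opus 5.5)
The plan is to reduce the statement to Arrow's classical theorem. The idea is to show that \ref{ch1:IIA} and \ref{ch1:MI} together force the planner's ranking of \emph{pure} alternatives in $A$ to depend only on the individuals' \emph{ordinal} pairwise comparisons. This yields an Arrovian social welfare function on ordinal profiles, which must be dictatorial, and dictatorship contradicts \ref{ch1:AN}. Throughout I will assume $n\ge 2$ and $|A|\ge 3$. For $n=1$ the rule ``copy the individual's preference'' satisfies all five axioms, and for $|A|\le 2$ majority-type rules work, so these must be the implicit nondegeneracy conditions of the proposition.

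\textbf{Step 1 (binary independence).} Fix $x,y\in A$ and two profiles $R,R'$ such that every individual ranks $x$ versus $y$ the same way (strictly or indifferently) under both. A vNM preference restricted to $\triangle\{x,y\}$ is completely determined by the ordinal comparison of $x$ and $y$. Hence $R$ and $R'$ agree on $\{x,y\}$, and \ref{ch1:IIA} gives $\succcurlyeq_*^{(\{x,y\},R)}=\succcurlyeq_*^{(\{x,y\},R')}$. By \ref{ch1:MI}, $\succcurlyeq_*^{(A,R)}$ agrees with $\succcurlyeq_*^{(\{x,y\},R)}$ on $\{x,y\}$, and the same holds for $R'$. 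Therefore the social comparison of $x$ and $y$ under $\succcurlyeq_*^{(A,R)}$ depends only on the individuals' ordinal rankings of the pair $\{x,y\}$.

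\textbf{Step 2 (an ordinal Arrovian rule).} Every profile of weak orders on $A$ is the ordinal restriction of some vNM profile, for instance by taking utilities equal to ranks. Define $F$ on ordinal profiles by letting $F$ be the restriction to $A$ of $\succcurlyeq_*^{(A,R)}$ for any vNM profile $R$ that induces the given ordinal profile.
\begin{itemize}
\item $F$ is well defined: by Step 1, two vNM profiles inducing the same ordinal profile yield the same comparison of every pair in $A$.
\item $F$ is complete and transitive by \ref{ch1:RA}.
\item $F$ satisfies Arrow's (binary) IIA by Step 1.
\item $F$ satisfies weak Pareto by \ref{ch1:SP}.
\end{itemize}
Restricting $F$ to profiles of strict orders, Arrow's theorem (with $|A|\ge 3$) yields a dictator $d$ on that domain.

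\textbf{Step 3 (contradiction with anonymity).} Pick distinct $x,y$, an individual $j\neq d$, and a strict-order profile $P$ in which $d$ ranks $x$ above $y$ and $j$ ranks $y$ above $x$. Let $P'$ be the profile obtained from $P$ by swapping the orders of $d$ and $j$. Dictatorship gives $x\succ y$ under $F(P)$ and $y\succ x$ under $F(P')$. Now take vNM profiles $R$ and $R'$ inducing $P$ and $P'$, with $R'$ obtained from $R$ by the same permutation. Then \ref{ch1:AN} gives $\succcurlyeq_*^{(A,R)}=\succcurlyeq_*^{(A,R')}$, so $F(P)$ and $F(P')$ coincide, which is a contradiction.

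The only step needing care is Step 1: it must be checked that ``agree on $\{x,y\}$'' in the paper's sense, which concerns all lotteries in $\triangle\{x,y\}$, is implied by mere ordinal agreement on the pair. This holds because any vNM preference on a two-point set is determined by the sign of $u(x)-u(y)$.
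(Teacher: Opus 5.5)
Your proof is correct. Like the paper, you reduce the statement to Arrow's theorem, but the paper gives no argument beyond noting that RA, SP and AN strengthen Arrow's conditions. For the fact that individuals are restricted to vNM preferences, it adds a footnote deferring to the literature on vNM domains (Sen, Kalai--Schmeidler, Hylland, Chichilnisky, Dhillon--Mertens).

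Your route makes the reduction explicit and self-contained. Step 1 spells out the mechanism behind the paper's own remark that binary IIA follows from IIA only when MI is also assumed. Step 2 shows that the detour through the vNM-domain literature is unnecessary in this framework. MI lets you pass to two-element menus, on which a vNM preference carries only ordinal information, and every profile of weak orders on $A$ is induced by some vNM profile. So Arrow's theorem applies verbatim to the ranking of pure alternatives. As a by-product, your argument uses only completeness and transitivity (from RA), weak Pareto (from SP) and non-dictatorship (from AN).

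You are also right that the proposition tacitly requires $n\ge 2$ and $|A|\ge 3$, and the paper states neither condition. For $n=1$, the rule that copies the individual's preference satisfies all five axioms. For $|A|\le 2$, majority rule does.
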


As we have argued in the introduction, we believe that both \ref{ch1:IIA} and \ref{ch1:MI} force the social planner to ignore valuable context and should therefore be reconsidered. We will weaken each of these conditions, using the following notion of comparability. Let $[a]$ denote the lottery that assigns probability 1 to the alternative $a\in A$.

\begin{definition} $a\in A$ is \textit{comparable} relative to $B \subseteq A$ under $R\in\mathcal{R}^n$ if (i) $a\notin B$ and (ii) for every $i\in N$, there exists $x_i\in\triangle B$ such that $x_i \sim^R_i [a]$.
\end{definition}

An alternative is comparable to a set if, for each individual, there is a pay-off in the set equal to that of the alternative.

First, we weaken \ref{ch1:MI}.

\begin{axiomp}{MICA}[Menu Independence of Comparable Alternatives]\label{ch1:MICA}
For each $(S,R)\in \Omega$ and $S'\subseteq S$ where every $a\in S\setminus S'$ is comparable relative to $S'$, $\succcurlyeq^{(S,R)}_*$ and $\succcurlyeq^{(S',R)}_*$ agree on $S'$.
\end{axiomp}

MICA says that removing comparable alternatives from the menu does not change the planner's evaluation of the remaining alternatives.

Weakening \ref{ch1:MI} to \ref{ch1:MICA} results in a representation of the planner's evaluation we refer to as \textit{menu-contingent utilitarianism}. For any $R\in \mathcal{R}^n$ and $B \subseteq A$, let $u^{B,R}_i$ denote the representation of $\succcurlyeq^R_i$ where $\max_{a\in B} u_i^{B,R}(a) = 1$ and $\min_{a\in B} u_i^{B,R}(a) = 0$, unless $\succcurlyeq^R_i$ is indifferent on $B$ in which case $u_i^{B,R}(a) = 0$ for all $a\in B$.\footnote{For any binary relation $\succcurlyeq$ on a set $X$, a utility function $u:X\to \mathbb{R}$ is said to \textit{represent} $\succcurlyeq$ if for all $x,y\in X$, $u(x)\geq u(y)$ if and only if $x \succcurlyeq y$.} Furthermore, for any $B\subseteq A$ we denote by $|B|$ the number of elements in $B$.

\begin{theorem}[Menu-Contingent Utilitarianism]\label{ch1:MCE}
Let $|A|\geq 2n+4$. An aggregation rule $\succcurlyeq_*$ satisfies \ref{ch1:RA}, \ref{ch1:SP}, \ref{ch1:AN}, \ref{ch1:IIA} and \ref{ch1:MICA} if and only if for each $(S,R)\in \Omega$, $\succcurlyeq^{(S,R)}_*$ is represented by \begin{equation}\label{ch1:MCErep}\sum_{i\in N} u_i^{S,R}.\end{equation}
\end{theorem}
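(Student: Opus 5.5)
We prove the two directions separately; the substance is in necessity.

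\emph{Sufficiency.} We check each axiom for the rule $\sum_i u_i^{S,R}$.
\begin{itemize}
\item \ref{ch1:RA} holds because the rule is an expected-utility functional.
\item \ref{ch1:SP} holds because every non-indifferent individual receives weight one on a non-constant utility, while an individual who is indifferent on $S$ contributes nothing and cannot be strictly better off anywhere in $\triangle S$.
\item \ref{ch1:AN} holds because the sum is symmetric.
\item \ref{ch1:IIA} holds because $u_i^{S,R}$ depends only on $\succcurlyeq_i^R$ restricted to $\triangle S$.
\item For \ref{ch1:MICA}: if every $a\in S\setminus S'$ is comparable relative to $S'$, each individual values $a$ like some lottery on $S'$. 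So $\max_{S}$ and $\min_{S}$ of $i$'s utility are attained in $S'$, and indifference on $S'$ implies indifference on $S$. Hence $u_i^{S,R}$ and $u_i^{S',R}$ coincide on $S'$.
\end{itemize}

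\emph{Necessity, step 1 (Harsanyi form).} Fix $(S,R)$. By \ref{ch1:RA} and \ref{ch1:SP}, Harsanyi's theorem (strong Pareto version) gives a representation
$$\sum_i w_i^{(S,R)}\, u_i^{S,R}.$$
Here $w_i^{(S,R)}>0$ for every $i$ not indifferent on $S$, and the weight of an indifferent individual is irrelevant. Uniqueness of the weights holds only up to a common positive scale and only when the normalized utilities are affinely independent. If some individuals have identical or opposite preferences on $S$, weights are not separately identified. I would handle that case at the end by perturbing within a larger menu, as below.

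\emph{Necessity, step 2 (shrink the menu).} For each non-indifferent individual $i$, pick a best alternative $b_i\in S$ and a worst alternative $c_i\in S$, and let $S_0\subseteq S$ be the set of these at most $2n$ alternatives. Every $a\in S\setminus S_0$ is comparable relative to $S_0$, since each $i$ is indifferent between $[a]$ and some mixture of $b_i$ and $c_i$. So \ref{ch1:MICA} lets us replace $S$ by $S_0$ without changing the evaluation on $\triangle S_0$. It therefore suffices to show equal weights on menus with $|S_0|\le 2n$.

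\emph{Necessity, step 3 (equalize weights via an auxiliary menu).} Fix two non-indifferent individuals $i,j$. Take fresh alternatives $d,e\in A\setminus S_0$; the bound $|A|\ge 2n+4$ leaves at least four spare alternatives. Using \ref{ch1:IIA}, choose a profile $R'$ that agrees with $R$ on $S_0$ and puts $d,e$ inside every individual's utility range on $S_0$:
\begin{itemize}
\item individual $i$ is normalized so that $d$ has utility $1$ and $e$ has utility $0$;
\item individual $j$ has the reverse, $d$ at $0$ and $e$ at $1$;
\item every other individual $k$ gets the same value on $d$ and $e$ (for instance $0$).
\end{itemize}
Then $d,e$ are comparable relative to $S_0$, so by \ref{ch1:MICA} the evaluation on $S_0\cup\{d,e\}$ agrees with that on $S_0$. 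The weights $w_i,w_j$ from step 1 therefore govern the comparison of $[d]$ with $[e]$.

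I then want to run the comparability argument in the other direction, which may use the remaining spare alternatives. Enlarge the menu so that $S_0$ becomes comparable relative to a smaller, designed menu $E$ containing $d,e$ plus, if needed, a best and a worst alternative for each individual. On $E$ the restricted profile can be chosen, via \ref{ch1:IIA}, to be invariant under swapping $i$ and $j$ (with $[d]$ and $[e]$ exchanged). \ref{ch1:AN} then forces $[d]\sim_*[e]$ on $E$, and \ref{ch1:MICA} transports this indifference back to $S_0$. With the Harsanyi form this gives $w_i\cdot 1=w_j\cdot 1$, i.e.\ $w_i=w_j$.

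The degenerate case of identical or opposite preferences follows because the comparison of the designed pair $d,e$ still isolates $w_i$ versus $w_j$, independently of whether the rest of the weight vector is identified.

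\emph{Main obstacle.} The delicate step is the bookkeeping in step 3. We must simultaneously (a) keep the profile unchanged on the original menu, (b) make the new alternatives comparable in the right direction so that \ref{ch1:MICA} applies both when adding and when dropping alternatives, and (c) make the auxiliary menu genuinely $i\leftrightarrow j$ symmetric so that \ref{ch1:AN} applies, all within the budget of $2n+4$ alternatives. I would first try to make the auxiliary menu consist of exactly the $2n$ best and worst representatives plus a small symmetric gadget. I would then check carefully that indifferent individuals, whose utility is normalized to $0$, do not break comparability.
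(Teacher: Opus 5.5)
Your sufficiency check is fine, but the necessity argument has two genuine gaps. The first is the claim that AN forces $[d]\sim_*[e]$ on $E$. Suppose the profile on $E$ is invariant under swapping individuals $i,j$ together with the labels $d,e$. Then AN and IIA only tell you that the evaluation at $(E,R')$ equals the evaluation at $(E,\pi(R'))$, where $\pi$ exchanges the labels $d$ and $e$. Concluding indifference from this requires neutrality, which is not an axiom. Concretely, consider the rule that represents every state by $\sum_i w_i u_i^{S,R}$, with $w_i=2$ if $d$ is among $i$'s best alternatives in $S$ and $w_i=1$ otherwise. It satisfies RA, SP, AN and IIA, yet strictly prefers $d$ to $e$ in your symmetric state, since the difference is $w_i-w_j=1$.

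So MICA has to do the work, and this is the heart of the paper's proof (Proposition \ref{ch1:proppolar}). After the AN swap, IIA makes two spare alternatives $q_i,q_j$ exact copies of $p_i,p_j$. MICA adds them, and SP and RA transfer the ranking to the copies. MICA then drops $p_i,p_j$, IIA redeclares $q_i,q_j$ as copies of $p_j,p_i$, and the steps are reversed. This clone-and-relabel chain is where the spare alternatives from $|A|\geq 2n+4$ are genuinely used, and your sketch has nothing in its place.

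The second gap is transporting the conclusion back to $(S,R)$. MICA and IIA only make the auxiliary evaluation agree with that of $(S,R)$ on $\triangle S_0$, and this does not pin down the weights on $S$. If all individuals share the same best and the same worst alternative in $S$, then $|S_0|=2$ and no information about interior alternatives survives. So the reduction ``it suffices to show equal weights on menus with $|S_0|\leq 2n$'' is false.

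Even working on all of $S$, handling each pair $i,j$ in a separate auxiliary state fails in exactly the degenerate case you wave through. With $n=3$ and $u_1=u_2$ on $S$, the evaluation depends only on $(w_1+w_2)/w_3$. Learning that one representation has $w_1=w_3$ and another has $w_2=w_3$ is compatible with every ratio above $1$, not just $2$.

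The paper avoids identification issues altogether. For each triple $T\subseteq S$ it builds a single state containing $T$, one common best clone, at most $n$ worst alternatives and a full polar set $P$. In that one state the polar alternatives identify and equalize all weights at once, and the individual utilities on $T$ coincide with $u_i^{S,R}$. A chain of MICA and IIA steps links this state to $(S,R)$ on $T$; when $S=A$ it first drops a comparable alternative, which exists by counting. Lemma \ref{ch1:lemmathree} then assembles the triples into a representation on all of $S$.
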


We sketch the proof of Theorem \ref{ch1:MCE} in Section \ref{ch1:secsketch}. A complete proof can be found in Appendix \ref{ch1:appendix main proof}. Note that no axiom of Theorem \ref{ch1:MCE} is implied by the other axioms of the theorem. Hence, a subset of the axioms would not suffice for the representation. We prove this in Appendix \ref{ch1:appendix droping axioms}.

Next, we weaken \ref{ch1:IIA}. For any $B\subset A$ we write $B^{c}$ to denote $A\setminus B$.

\begin{axiomp}{IICA}[Independence of Irrelevant Comparable Alternatives]\label{ch1:IICA}
Fix $(S,R)\in \Omega$ and $C\subseteq S^{c}$ such that every $a\in C$ is comparable relative to $C^c$ under $R$. For any $R'\in \mathcal{R}^n$, if $R$ and $R'$ agree on $C^c$ and every $a\in C$ is comparable relative to $C^c$ under $R'$, then ${\succcurlyeq_*^{(S,R')}} = {\succcurlyeq_*^{(S,R)}}$.
\end{axiomp}

IICA says that the planner's evaluation of the menu cannot depend on individual preferences over comparable alternatives outside the menu. Hence, in a counterfactual world, where individual preferences over these alternatives are different in a way such that these alternatives are still comparable, the planner's evaluation must be the same.

Weakening \ref{ch1:IIA} to \ref{ch1:IICA} results in a representation we refer to as \textit{setting-contingent utilitarianism}.

\begin{theorem}[Setting-Contingent Utilitarianism]\label{ch1:SCE}
Let $|A|\geq 2n+4$. An aggregation rule $\succcurlyeq_*$ satisfies \ref{ch1:RA}, \ref{ch1:SP}, \ref{ch1:AN}, \ref{ch1:IICA} and \ref{ch1:MI} if and only if for each $(S,R)\in \Omega$, $\succcurlyeq^{(S,R)}_*$ is represented by \begin{equation}\label{ch1:SCErep}\sum_{i\in N} u_i^{A,R}.\end{equation}
\end{theorem}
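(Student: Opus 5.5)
The plan is to prove sufficiency directly and then obtain necessity from Harsanyi's theorem, using \ref{ch1:IICA} and \ref{ch1:AN} to pin down the weights. For sufficiency, the rule $\sum_{i} u_i^{A,R}$ is linear in lotteries, so \ref{ch1:RA} holds. It is increasing in each $u_i^{A,R}$, which gives \ref{ch1:SP}. It is symmetric in $i$, which gives \ref{ch1:AN}. It does not depend on $S$, which gives \ref{ch1:MI}. For \ref{ch1:IICA}, the key fact is that if every $a\in C$ is comparable relative to $C^c$, then each individual's best and worst alternative in $A$ can be taken in $C^c$. Hence the normalization $u_i^{A,R}$ restricted to $C^c\supseteq S$ depends only on $R$ restricted to $C^c$, and so does the evaluation of $\triangle S$.

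For necessity, \ref{ch1:MI} reduces everything to the state $(A,R)$: for any $S$, $\succcurlyeq_*^{(S,R)}$ is the restriction of $\succcurlyeq_*^{(A,R)}$, and conversely any comparison in $(A,R)$ between $x,y\in\triangle S$ can be read off from $(S,R)$. By \ref{ch1:RA}, \ref{ch1:SP} and Harsanyi's theorem, $\succcurlyeq_*^{(A,R)}$ is represented by $\sum_i w_i^R u_i^{A,R}$ with $w_i^R>0$ for every $i$ who is not indifferent on $A$; the weights of indifferent individuals are irrelevant. It then remains to show that $w_i^R=w_j^R$ for all non-indifferent $i,j$.

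The route I would take exploits the interplay of \ref{ch1:MI} and \ref{ch1:IICA}. Fix a small menu $S$, for instance three or four alternatives used as test points. Let $E$ collect a best and a worst alternative for each individual, so $|E|\le 2n$. Every alternative in $C:=A\setminus(S\cup E)$ is comparable relative to $C^c$, so the evaluation of $S$ is unchanged when preferences over $C$ are altered while keeping these alternatives comparable. The assumption $|A|\geq 2n+4$ guarantees that there is room for a menu $S$ of four alternatives disjoint from $E$. I would choose $S=\{c,d,e,f\}$ and move from $R$ to a profile $R'$ agreeing with $R$ on $E$ but with test alternatives placed as follows: $u_i(c)=u_j(d)=1$, $u_i(d)=u_j(c)=0$, and all other individuals receive $1/2$ at $c$ and $d$. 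These moves do not leave the class of profiles governed by the unknown weights of $R$, because one passes $R\to R'\to\dots$ by \ref{ch1:IICA} steps that each preserve the evaluation on a common menu. That common menu lets the ratio $w_i/w_j$ be carried across profiles.

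The main obstacle is producing a symmetry that forces $w_i=w_j$, since \ref{ch1:AN} permutes individuals but not alternatives. I would handle this by building, via a chain of \ref{ch1:IICA} modifications and \ref{ch1:MI} restrictions, a profile $R''$ in which $i$ and $j$ have identical normalized utilities on the menu $\{c,d\}$ up to relabeling. In $R''$ the swap of $i$ and $j$ (by \ref{ch1:AN}) together with \ref{ch1:SP} applied to the lottery $\tfrac12[c]+\tfrac12[d]$ pins down $[c]\sim_*[d]$. That indifference yields $w_i^{R''}=w_j^{R''}$, and the chain transports this equality back to $R$. The spare alternatives $e,f$ serve as intermediate copies of extremes, letting one reassign which alternatives play the role of best and worst without ever violating comparability. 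Verifying that every step of this chain keeps the modified alternatives comparable relative to the fixed set is the delicate bookkeeping I expect to dominate the proof.
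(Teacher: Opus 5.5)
\textbf{There is a genuine gap in the necessity direction.} Your sufficiency argument is fine: comparability of $C$ relative to $C^c$ puts every individual's extremes in $C^c$, so $u_i^{A,R}$ on $C^c\supseteq S$ depends only on $R$ restricted to $C^c$. Reducing to $(A,R)$ via MI and Harsanyi is also the right start.

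The first problem is the symmetry step, which carries the theorem. The mechanism you name cannot give $[c]\sim_*[d]$, even in a profile symmetric under swapping $i,j$ and $c,d$ simultaneously. AN only says the evaluation is unchanged when individuals are swapped. To conclude, you need that relabeling $c\leftrightarrow d$ reverses the evaluation, i.e.\ neutrality, and that is not an axiom. SP applied to $\tfrac12[c]+\tfrac12[d]$ puts no constraint between $[c]$ and $[d]$.

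The paper manufactures exactly this piece of neutrality in Proposition \ref{ch1:proppolar2}. Take a polar menu $P$ whose complement is comparable. It adds duplicates $q_i\sim p_i$, $q_j\sim p_j$ outside $P$ (IICA) and enlarges the menu (MI). It transfers $p_i\succcurlyeq_* p_j$ to $q_i\succcurlyeq_* q_j$ (PI, RA) and drops $p_i,p_j$ (MI). It then re-attaches the duplicates the other way round, $q_i\sim p_j$ and $q_j\sim p_i$ (IICA), and walks back to conclude $p_j\succcurlyeq_* p_i$. Your ``spare alternatives as intermediate copies'' points in that direction. But that chain is the core of the proof, and you left it as unspecified bookkeeping.

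The second problem is the transport of $w_i/w_j$ from $R$. IICA preserves the evaluation only on a menu whose preferences are held fixed. So you cannot reposition $c,d$ while keeping the evaluation of $S=\{c,d,e,f\}$. If the fixed menu is instead $E$, the evaluation on $E$ need not identify weight ratios. For example, if all individuals share the same best and worst alternatives, then $|E|=2$. In such degenerate profiles the Harsanyi weights are not unique, so ``$w_i^R=w_j^R$'' is not the right target. Positivity of all weights under SP also needs its own argument there.

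The paper instead holds a triple $\{b,c,d\}$ fixed throughout and transports its evaluation. First it adds one alternative $e$ that is a common best for everyone. It then shrinks to a set $S_{\III}\supseteq\{b,c,d,e\}$ of at most $n+4$ alternatives that contains all extremes. Next it places $n$ polar alternatives in the complement of $S_{\III}$; this is where $|A|\ge 2n+4$ is used. In that enlarged state, polar indifference forces equal weights, and SP then makes the common weight positive. Lemma \ref{ch1:lemmathree} glues the triples into the full representation.
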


We prove Theorem \ref{ch1:SCE} in Appendix \ref{ch1:appendix main proof}. No axiom of Theorem \ref{ch1:SCE} is implied by the other axioms of the theorem, which we show in Appendix \ref{ch1:appendix droping axioms}.

In both theorems, we assume that there are at least $2n+4$ possible alternatives. We make this assumption, as it allows us to employ simple and insightful proofs. However, as shown by the following proposition, our proofs require only a few more alternatives than what is necessary for the axioms to be sufficient.

\begin{proposition}\label{ch1:propsuff}
Let $|A|< 2n+1$. There exists an aggregation rule $\succcurlyeq_*$ which satisfies \ref{ch1:RA}, \ref{ch1:SP}, \ref{ch1:AN}, \ref{ch1:IIA} and \ref{ch1:MICA} and which is not represented by (\ref{ch1:MCErep}). Similarly, there exists an aggregation rule $\succcurlyeq_*$ which satisfies \ref{ch1:RA}, \ref{ch1:SP}, \ref{ch1:AN}, \ref{ch1:IICA} and \ref{ch1:MI} and which is not represented by (\ref{ch1:SCErep}).
\end{proposition}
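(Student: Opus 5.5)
The plan is to exhibit, for each of the two theorems, an aggregation rule that coincides with the relevant relative utilitarian rule everywhere except on one permutation-orbit of "rigid" states. A rigid state is one that no axiom connects to any other state. On that orbit we use an unequal but anonymous weighting. The hypothesis $|A|<2n+1$, i.e.\ $|A|\le 2n$, is exactly what makes rigid states exist.

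\emph{Step 1 (a rigid profile).} Each individual has one best and one worst alternative, so the $n$ individuals can cover at most $2n$ alternatives as unique extremes. Since $|A|\le 2n$, we can choose a profile $\bar R$ in which every $a\in A$ is the \emph{unique} best or \emph{unique} worst alternative of some individual. We may assume $|A|\ge 2$, since otherwise the statement is vacuous. We also choose $\bar R$ so that not all individuals have the same preference, so that unequal weights really change the evaluation. Suppose $a$ is individual $i$'s unique top alternative. Then no lottery over any $B\not\ni a$ is $\sim_i [a]$, so $a$ is comparable relative to no set. The same holds if $a$ is $i$'s unique bottom alternative. Hence under $\bar R$, or any permutation of it, no alternative of $A$ is comparable relative to any set.

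\emph{Step 2 (the modified rules).} Fix a label $a_0\in A$. Let $w_i(R)=2$ if $a_0$ is individual $i$'s unique best alternative under $R$, and $w_i(R)=1$ otherwise. This weight depends only on $i$'s preference, not on $i$'s identity.

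For Theorem \ref{ch1:MCE}, the modified rule uses $\sum_i w_i(R)\,u_i^{A,R}$ on the states $(A,R)$ with $R$ a permutation of $\bar R$, and (\ref{ch1:MCErep}) everywhere else.

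For Theorem \ref{ch1:SCE}, the modified rule uses $\sum_i w_i(R)\,u_i^{A,R}$, restricted to $\triangle S$, at all states $(S,R)$ with $R$ a permutation of $\bar R$, and (\ref{ch1:SCErep}) otherwise.

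We arrange $\bar R$ so that $a_0$ is someone's unique top but not everyone's. The weighted sum then differs from the unweighted sum for suitable choices of cardinal utilities, so neither rule is represented by (\ref{ch1:MCErep}) or (\ref{ch1:SCErep}), respectively.

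\emph{Step 3 (verifying the axioms).} \ref{ch1:RA} holds because each evaluation is an expected-utility representation. \ref{ch1:SP} holds because all weights are strictly positive. \ref{ch1:AN} holds because the set of modified states is closed under permutations and the weights travel with the preferences.

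For the first rule:
\begin{itemize}
\item \ref{ch1:IIA} at menu $A$ is trivial, since agreeing on $A$ means $R'=R$.
\item \ref{ch1:MICA} could link $(A,R)$ only to $(S',R)$ with $A\setminus S'$ comparable relative to $S'$. By Step 1 this forces $S'=A$.
\item Since the menu $A$ has no proper superset, the modified states are linked to no other state, and all remaining instances of the axioms hold because (\ref{ch1:MCErep}) satisfies them.
\end{itemize}

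For the second rule:
\begin{itemize}
\item \ref{ch1:MI} holds because for each fixed $R$ the evaluations are restrictions of a single preference over $\triangle A$.
\item For \ref{ch1:IICA} starting from a permutation $R$ of $\bar R$, the set $C$ must consist of alternatives comparable relative to $C^c$ under $R$. By Step 1 this forces $C=\emptyset$, and then agreement on $C^c=A$ gives $R'=R$.
\item Conversely, suppose IICA is applied starting from some other $R$ and reaches a permutation $R'$ of $\bar R$. The axiom requires every $a\in C$ to be comparable under $R'$, so again $C=\emptyset$ and $R=R'$.
\end{itemize}

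The main obstacle I expect is this last bookkeeping: confirming that no chain of axiom instances connects a modified state to an unmodified one. In particular this must be checked for \ref{ch1:IICA}, which quantifies comparability under both $R$ and $R'$. The key fact used throughout is Step 1: a unique extreme alternative is comparable relative to no set, and this remains true for subsets $S'$ as well.
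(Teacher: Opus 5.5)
Your proposal is correct and follows the paper's strategy. You take the permutation orbit of a profile under which no alternative is comparable relative to any set, use anonymous but unequal weights on that orbit, and check that no instance of the axioms links the orbit to the rest of $\Omega$. The paper only asserts that such a profile exists when $|A|<2n+1$; your unique-extremes construction actually supplies one.

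Where you depart from the paper, you improve on it. For the second rule the paper modifies only the states $(A,R)$ with $R$ in the orbit, using weights $\sum_{a\in S}u_i^{A,R}(a)$. It then claims that no axiom connects these states to unmodified ones. But \ref{ch1:MI} ties $(A,\hat R)$ to every $(S',\hat R)$ with $S'\subsetneq A$. By the argument of Lemma \ref{ch1:lemmathree}, two vNM utilities on $A$ that are not positive affine transformations of each other already rank some triple differently. So for $|A|\geq 4$ the paper's rule, as written, violates \ref{ch1:MI}. Your choice to modify every $(S,R)$ with $R$ in the orbit, with weights depending on $R$ alone, is exactly the repair needed. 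Your \ref{ch1:IICA} check is also right: comparability under either $R$ or $R'$ forces $C=\emptyset$, hence $R'=R$.

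Two small corrections.

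First, the excluded cases are not vacuous. If $|A|\leq 1$, or if $n=1$, every admissible rule is represented by (\ref{ch1:MCErep}) and by (\ref{ch1:SCErep}), so the statement fails there. For $n=1$ this is because \ref{ch1:SP} alone forces $\succcurlyeq_*^{(S,R)}$ to equal $\succcurlyeq_1^R$ on $\triangle S$. You therefore need $n\geq 2$ and $|A|\geq 2$, as the paper also tacitly assumes.

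Second, the phrase ``for suitable cardinal utilities'' deserves one line. Let $T$ be the set of individuals whose unique top is $a_0$. Then $\sum_{i\in T}u_i^{A,\bar R}$ is strictly larger at $a_0$ than at any $b\neq a_0$. So it suffices to choose the rigid $\bar R$ with $\sum_{i\in N}u_i^{A,\bar R}(a_0)=\sum_{i\in N}u_i^{A,\bar R}(b)$ for some $b\neq a_0$. Then the unweighted sum is indifferent between $a_0$ and $b$ while the weighted sum strictly prefers $a_0$. Such a choice is easy once $n\geq 2$ and $2\leq |A|\leq 2n$.
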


We prove Proposition \ref{ch1:propsuff} in Appendix \ref{ch1:appendix main proof}.

Finally, we want to mention \textit{neutrality}, an additional desideratum that is satisfied by both our representations. Neutrality says that the labels of alternatives play no role in the planner's evaluation. Hence, if the labels of $a$ and $b$ were interchanged and $a$ was preferred by the planner before, then $b$ must be preferred afterward. We believe this to be the most important impartiality requirement besides anonymity. If neutrality wasn't implied by our other axioms, we would have imposed it directly. Since we will refer to neutrality in the upcoming sections, a formal definition is in order. We say that $\pi:A\mapsto A$ is a permutation of $A$ if $\pi$ is bijective and denote by $\Pi$ the set of permutations of $A$. We abuse notation and define $\pi(S):=\left\{\pi(a): a\in S \subseteq A\right\}$. We write $\pi(x) \in \triangle A$ to denote the lottery that for every $a\in A$ assigns probability $\mu\in[0,1]$ to $\pi(a)$ if and only if $x$ assigns probability $\mu$ to $a$. Let $\pi(R):=(\succcurlyeq^{\pi(R)}_i)_{i\in N} \in \mathcal{R}^n$ denote the preference profile which has the same preferences on the permuted alternatives as $R$ on the original alternatives. Formally, $\pi(x) \succcurlyeq_i^{\pi(R)} \pi(y)$ if and only if $x \succcurlyeq^R_i y$ for all $i\in N$ and $x,y\in\triangle A$. 

\begin{axiomp}{NE}[Neutrality]\label{ch1:NE}
For each $(S,R)\in \Omega$, $x,y\in \triangle S$ and $\pi\in\Pi$, $x \succcurlyeq^{(S,R)}_* y$ if and only if $\pi(x) \succcurlyeq^{(\pi(S),\pi(R))}_* \pi(y)$.
\end{axiomp}

\subsection{Discussion}

Our objective was not to identify the weakest set of axioms leading to either of the above representations. Instead, we wanted to demonstrate that even with minor modifications to Arrow's axioms, preference aggregation becomes possible. To make this point convincingly, it is crucial that our axioms are as strong as they can be. That being said, whether weaker axioms would suffice remains an important question. This holds especially true in view of the existing literature that weakens \ref{ch1:IIA} to \textit{independence of redundant alternatives} (\ref{ch1:IRA}); see \textcite{dhillon1999relative} and \textcite{brandl2021beliefaveraging}. A redundant alternative is one for which there is a lottery over the remaining alternatives that leaves every individual indifferent between the two. While any redundant alternative is comparable, the converse does not hold. Consequently, \ref{ch1:IRA} is weaker than \ref{ch1:IICA}. We show in Appendix \ref{ch1:appendix minimal sufficiency} that weakening \ref{ch1:IICA} to \ref{ch1:IRA} would not suffice for our representation. Similarly, weakening \ref{ch1:MICA} to \textit{menu independence of redundant alternatives} (\ref{ch1:MIRA}) wouldn't suffice either. However, in another aspect \ref{ch1:IICA} is indeed stronger than necessary.  It would suffice if \ref{ch1:IICA} only applied to cases where all alternatives outside the menu are comparable (i.e., $C=S^c$), akin to the formulation of \ref{ch1:IRA}. We opted for the stronger version of the axiom as it more closely aligns with Arrow's \ref{ch1:IIA}.

\section{Sketch of Proof}\label{ch1:secsketch}

To highlight some insights, this section provides a sketch of the proof. Since the proofs of both theorems are quite similar, we only sketch the proof of Theorem \ref{ch1:MCE}.

Before we prove the theorem, we derive two interim results. Note that \ref{ch1:SP} implies \textit{Pareto indifference} (\ref{ch1:PI}), which says that if every individual in society is indifferent between two lotteries, then so is the social planner. The first interim result states that if both \ref{ch1:RA} and \ref{ch1:PI} are satisfied, then the utility function of the social planner can be expressed as a weighted sum of the individual utility functions. This result is well known and has first been postulated by \textcite{harsanyi1955cardinal}. However, Harsanyi's original proof contains a mistake, which led to a variety of proofs by the subsequent literature. We, in turn, provide our own proof of Harsanyi's theorem, similar to those by \textcite{border1985more}, \textcite{selinger1986harsanyi}, and \textcite{hammond1992harsanyi}, albeit ours is self-contained as we do not refer to mathematical theorems. Our proof makes use of the \textit{pay-off matrix}, which indicates for each individual the vNM utility of every alternative in the menu.\footnote{A row of 1's is included in the pay-off matrix to allow for a constant in the representation.} Figure \ref{ch1:fig:pay} shows an example for three individuals and four alternatives.\begin{figure}[h]
\center
$\kbordermatrix{
     & a & b & c & d\\
     & 1 & 1 & 1 & 1 \\
    i=1 & .5 & .5 & 0 & 1 \\
    i=2 & 1 & 0 & .7 & .3 \\
    i=3 & 0 & 1 & .2 & .8
  }$
\caption{Pay-off matrix.}
\label{ch1:fig:pay}
\end{figure} By \ref{ch1:RA}, the planner's evaluation can be represented by a row vector of vNM utilities as well. What needs to be shown is that whenever \ref{ch1:PI} is satisfied, the planner's utility vector is equal to some linear combination of the rows in the pay-off matrix. If individual preferences are sufficiently diverse, such that the rows of the pay-off matrix span the entire vector space, then any logically possible vNM preference over the menu can be expressed by a linear combination of the individual utility functions. If individual preferences are not sufficiently diverse, we show that there exists a \textit{dependent alternative}, whose column can be expressed by a linear combination of the other columns. For instance, in our example Column $d$ is equal to $a$ plus $b$ minus $c$. We then sequentially drop dependent alternatives until the columns of the remaining alternatives are linearly independent. Then individual preferences over the remaining alternatives are sufficiently diverse such that any vNM preference over these alternatives can be expressed by a linear combination of rows. For each of the dependent alternatives, we can identify two lotteries, one of them involving the dependent alternative, such that every individual is indifferent between them. In the case of alternative $d$, these lotteries would be $\frac{1}{2}[a]+\frac{1}{2}[b]$ and $\frac{1}{2}[c]+\frac{1}{2}[d]$. By \ref{ch1:PI}, also the planner must be indifferent between these lotteries, and hence, the planner's utility of the dependent alternative is pinned down by the same linear combination as for the individuals.

For our second interim result, we introduce the concept of \textit{polar} alternatives. An alternative is polar if it is best among the menu for one individual and worst among the menu for everyone else. If the menu consists only of polar alternatives, we call such a state a \textit{polar state}. Our second interim result then says that in a polar state, the planner is indifferent between all alternatives. Consider, for instance, a polar state as described by the pay-off matrix depicted on the left-hand side of Figure \ref{ch1:fig:polar}. \begin{figure}[h]
\center
\begin{align}\nonumber
	\kbordermatrix{
     & e & f & g\\
     & 1 & 1 & 1 \\
    i=1 & 1 & 0 & 0 \\
    i=2 & 0 & 1 & 0 \\
    i=3 & 0 & 0 & 1
  } && {\large \xrightleftharpoons[\text{neutrality}]{\text{anonymity}}} && \kbordermatrix{
     & e & f & g\\
     & 1 & 1 & 1 \\
    i=1 & 0 & 1 & 0 \\
    i=2 & 1 & 0 & 0 \\
    i=3 & 0 & 0 & 1
  }
\end{align}
\caption{Polar state.}
\label{ch1:fig:polar}
\end{figure} Note that the pay-off matrix doesn't describe the state completely, as it does not specify individual preferences over infeasible alternatives. However, by \ref{ch1:IIA}, preferences over infeasible alternatives can be ignored. We assume that the planner weakly prefers $e$ over $f$ and show that this implies a weak preference for $f$ over $e$. This, of course, only leaves indifference. First, notice that the pay-off matrix on the right-hand side of Figure \ref{ch1:fig:polar} depicts the state that results from permuting the labels of Individuals 1 and 2 in the left-hand state. If $e$ is weakly preferred to $f$ in the left-hand state, by \ref{ch1:AN} this must be the case in the right-hand state as well. Second, notice that the left-hand state results from permuting the labels of alternatives $e$ and $f$ in the right-hand state. If $e$ is weakly preferred to $f$ in the right-hand state, \ref{ch1:NE} would demand that $f$ must be weakly preferred to $e$ in the left-hand state. With the help of \ref{ch1:MICA} and \ref{ch1:PI}, we show that indeed \ref{ch1:NE} is satisfied in polar states. We leave the details to the formal proof in the appendix.

Finally, we prove the theorem. Consider individual preferences over the menu $\{a,b,c,d\}$ as depicted in Figure \ref{ch1:fig:pay} and assume that for each individual, there is a comparable polar alternative outside the menu, denoted by $e$, $f$, and $g$. In the menu $\{e,f,g\}$, the planner must be totally indifferent as we have shown previously. Now we add the original menu, resulting in Figure \ref{ch1:fig:combine}. \begin{figure}[h]
\center
$\kbordermatrix{
     & a & b & c & d & e & f & g\\
     & 1 & 1 & 1 & 1 & 1 & 1 & 1\\
    i=1 & .5 & .5 & 0 & 1 & 1 & 0 & 0\\
    i=2 & 1 & 0 & .7 & .3 & 0 & 1 & 0\\
    i=3 & 0 & 1 & .2 & .8 & 0 & 0 & 1
  }$
\caption{Resulting pay-off matrix.}
\label{ch1:fig:combine}
\end{figure} Since we have only added comparable alternatives, by \ref{ch1:MICA}, the planner must still be indifferent between $e$, $f$, and $g$. This indifference, together with our first interim result, then implies that equal weights on the individual utility functions represent the planner's evaluation. \ref{ch1:SP} ensures that these common weights are positive and can be normalized to 1. Finally, we remove the polar alternatives, and by \ref{ch1:MICA}, the same linear combination must still represent the planner's evaluation. By \ref{ch1:IIA}, this must hold even if there are no comparable polar alternatives outside the initial menu. Note that if there are fewer than $n$ alternatives outside the initial menu, the proof is more involved.

\section{Applications} \label{ch1:secapplic}

We apply our aggregation rules to three classic economic situations: aggregating the welfare of different consumers in a market, identifying the welfare of a dynamically inconsistent decision maker, and finding a fair solution to a bargaining problem. The literature has so far treated each of these problems in isolation. In the previous sections, we provided a unifying framework that aggregates individual preferences consistently across applications. With this methodology we uncover policy recommendations contrary to those by the established welfare criteria.

\subsection{Consumer Welfare}

Consider a society with $n$ individuals and two consumption goods, $m$ and $g$, where $m$ is the numéraire. A social alternative is an allocation of these goods to the individuals in society, hence an element of $\mathbb{R}_+^{2n}$. Individuals are self-interested and their utility is quasi-linear, formally for each $i\in N$, \begin{equation}\label{ch1:quasilinearut}u_i((m_1,g_1),...,(m_n,g_n))=\alpha_i m_i + v_i(g_i)\end{equation} for some $v_i:\mathbb{R}_+ \to \mathbb{R}$ and $\alpha_i\in\mathbb{R}_+$. The standard measure of aggregate welfare in this setting is \textit{total consumer surplus}, short TCS, given by 
\begin{equation}\nonumber
\begin{split}
W_{\text{TCS}}((m_1,g_1),...,(m_n,g_n)) & =\sum_{i\in N} \frac{1}{\alpha_i} u_i((m_1,g_1),...,(m_n,g_n))\\
& = \sum_{i\in N} m_i + \sum_{i\in N} \frac{v_i(g_i)}{\alpha_i}.
\end{split}
\end{equation}
Consider two allocations, $a$ and $b$, such that some individuals strictly prefer $a$ and some strictly prefer $b$. If $W_{\text{TCS}}(a)$ is higher than $W_{\text{TCS}}(b)$, then there exist monetary transfers after $a$, resulting in an allocation $a'$, such that $a'$ Pareto dominates $b$. This makes TCS quite appealing. However, if these transfers aren't implemented, TCS makes a judgment on how the well-being of different individuals should be traded off. We show that the way in which these trade-offs are made violates neutrality. Specifically, TCS is sensitive to which of the two consumption goods is selected as the numéraire. Let $\mathcal{R}^{QL}$ be the subset of $\mathcal{R}^n$ such that individual preferences can be represented by quasi-linear utility functions as in (\ref{ch1:quasilinearut}).

\begin{proposition}\label{ch1:consumprop}
There doesn't exist an aggregation rule $\succcurlyeq_*$ on the restricted domain $\mathcal{R}^{QL}$ that is represented by $W_{\text{TCS}}$ and satisfies \ref{ch1:NE}.
\end{proposition}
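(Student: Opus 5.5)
Suppose, for contradiction, that $\succcurlyeq_*$ on $\mathcal{R}^{QL}$ is represented by $W_{\text{TCS}}$ and satisfies \ref{ch1:NE}. The plan is to build one state $(S,R)$ with $R\in\mathcal{R}^{QL}$ and a permutation $\pi$ of allocations that swaps the roles of $m$ and $g$. The permuted profile $\pi(R)$ should again be quasi-linear, but with a different numéraire weighting, so that $W_{\text{TCS}}$ ranks $\pi(x),\pi(y)$ differently from $x,y$.

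\textbf{Construction.} Take $n=2$; for larger $n$, make all other individuals indifferent over $S$ by giving them identical bundles in every alternative of the menu. Let $\pi$ swap coordinates within each bundle: $\pi((m_1,g_1),(m_2,g_2))=((g_1,m_1),(g_2,m_2))$. This map is a bijection of $\mathbb{R}_+^{2n}$, so it is a permutation of $A$.

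Choose linear utilities $u_1=m_1+g_1$ and $u_2=m_2+2g_2$, which are quasi-linear with $v_i$ linear. The permuted profile has $u_i^{\pi}(\pi(a))=u_i(a)$. Hence $u_1^{\pi}=m_1+g_1$ and $u_2^{\pi}=2m_2+g_2$, which is again quasi-linear with $\alpha_2=2$ and $v_2(g)=g$. So $\pi(R)\in\mathcal{R}^{QL}$.

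Under $R$, $W_{\text{TCS}}=m_1+g_1+m_2+2g_2$. Under $\pi(R)$, $W_{\text{TCS}}=m_1+g_1+\tfrac12(2m_2+g_2)=m_1+g_1+m_2+\tfrac12 g_2$.

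\textbf{Comparison.} Take $x$ with individual 1 holding $(1,0)$ and individual 2 holding $(0,0)$. Take $y$ with individual 1 holding $(0,0)$ and individual 2 holding $(0,\tfrac34)$. Let $S=\{x,y\}$. Under $R$, $W(x)=1<W(y)=\tfrac32$, so $y\succ x$.

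Now $\pi(x)$ gives individual 1 $(0,1)$ and individual 2 $(0,0)$, while $\pi(y)$ gives individual 1 $(0,0)$ and individual 2 $(\tfrac34,0)$. Under $\pi(R)$, $W(\pi(x))=1>W(\pi(y))=\tfrac34$, so $\pi(x)\succ\pi(y)$.

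\ref{ch1:NE} requires $\pi(y)\succcurlyeq\pi(x)$ whenever $y\succcurlyeq x$, which is contradicted. The comparisons involve only degenerate lotteries, so extending $W_{\text{TCS}}$ to lotteries by expectation does not matter.

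\textbf{Main obstacle.} The step needing most care is checking that $\pi$ is admissible and that $\pi(R)$ lies in $\mathcal{R}^{QL}$. If $A$ is meant to be a fixed finite set, I would instead take $A$ to be a finite set of allocations closed under the swap, for example the finitely many bundles used above together with their swaps. I would also check that the choice of utility representation within $\mathcal{R}^{QL}$ is unambiguous: $\alpha_i$ and $v_i$ are determined up to a common positive scaling and an additive constant in $v_i$. This leaves $W_{\text{TCS}}$'s ranking unchanged, since $\tfrac{1}{\alpha_i}u_i$ is invariant to the scaling and the constant only shifts $W$.
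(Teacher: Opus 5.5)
Your proof is correct and is essentially the paper's argument: the same bundle-wise swap $\pi$ of $m$ and $g$, linear $v_i$ so that $\pi(R)\in\mathcal{R}^{QL}$ and the TCS weights $1/\alpha_i$ become $1/\beta_i$, and a two-allocation comparison whose ranking flips (yours is the case $\alpha_1=\beta_1=\alpha_2=1$, $\beta_2=2$). For $n>2$ you should state explicitly that the remaining individuals also get linear $v_k$, since otherwise $\pi(R)$ leaves $\mathcal{R}^{QL}$ and neutrality cannot be invoked; your check that the TCS ranking does not depend on the chosen quasi-linear representation is a welcome addition that the paper omits.
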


\begin{proof}
It is sufficient to show that the axiom is violated in some state, so consider a state $R\in \mathcal{R}^{QL}$ where for each $i\in N$, $v_i(g_i)=\beta_i g_i$ for some $\beta_i>0$. Next consider the permutation $\pi$ of alternatives such that $\pi((m_1,g_1),...,(m_n,g_n)) = ((g_1,m_1),...,(g_n,m_n))$ and note that $\pi(R)\in \mathcal{R}^{QL}$. For \ref{ch1:NE} to be satisfied, it must hold that
\begin{equation}\label{ch1:csneutralproof}W_{\text{TCS}}^{R}(a) \geq W_{\text{TCS}}^{R}(b)\text{ if and only if }W_{\text{TCS}}^{\pi(R)}(\pi(a)) \geq W_{\text{TCS}}^{\pi(R)}(\pi(b)).\end{equation} Let $a=((0,1),(0,0),...,(0,0))$ and $b=((0,0),(2,0),...,(0,0))$. Then $W_{\text{TCS}}^{R}(a)=\frac{\beta_1}{\alpha_1}$, $W_{\text{TCS}}^{R}(b)=2$, $W_{\text{TCS}}^{\pi(R)}(\pi(a))=1$ and $W_{\text{TCS}}^{\pi(R)}(\pi(b))=\frac{\alpha_2}{\beta_2}$. This violates (\ref{ch1:csneutralproof}) for instance when $\beta_1=3$, $\alpha_1=1$, $\alpha_2=2$ and $\beta_2=1$.
\end{proof}

Now assume that allocations result from the following setting. A monopolist produces the good $g$ at constant marginal cost $c$. Individuals can be divided into two distinct groups: students $s$ and adults $a$. Within each group, every individual has the same preferences. The fraction of individuals belonging to Group $J\in\{s,a\}$ is denoted by $\gamma_J$. Total demand of Group $J$ is then $D_J(p):=n\gamma_J(v'_i)^{-1}(p \alpha_i)$ where $i\in J$. We assume that the monopolist can identify which group an individual belongs to and hence can choose a price pair $(p_s,p_a)\in [c,\infty)^2$, where $p_J$ denotes the price charged to Group $J$. We denote by $\psi(p_s,p_a) \in \mathbb{R}_+^{2n}$ the allocation resulting from a price pair $(p_s,p_a)$. Let $p_J^*$ denote the monopoly price charged to Group $J$, let $p^*$ denote the profit-maximizing price if price discrimination was prohibited, and assume $$p^*_a> p^* > p^*_s.$$ A policymaker can choose whether to allow or prohibit price discrimination, hence $S=\{\psi(p^*_s,p^*_a),\psi(p^*,p^*)\}$. To guide this decision, the policymaker wants to assess whether the prohibition of price discrimination benefits consumers. According to TCS, aggregate welfare is given by $$W_{\text{TCS}}(\psi(p_s,p_a))=CS_s(p_s)+CS_a(p_a)$$ where $CS_J(p):=\int_p^\infty D_J(p)dp$. Since total consumer surplus violates neutrality, we propose setting-contingent utilitarianism as an alternative. The setting puts natural bounds on the possible alternatives, namely $A=\{\psi(p_s,p_a) \in \mathbb{R}_+^{2n}:(p_s,p_a)\in [c,\infty)^2\}$. Setting-contingent utilitarianism, short SCU, is then given by $$W_{\text{SCU}}(\psi(p_s,p_a))= 
\frac{n \gamma_s}{CS_s(c)} CS_s(p_s)+ 
\frac{n \gamma_a}{CS_a(c)}CS_a(p_a).$$

We now compare these two welfare measures. To make things simple, we consider the textbook case of linear demand. It is well known that when demand is linear, total consumer surplus is higher under price discrimination if and only if one group isn't served under the uniform price. Our welfare measure, on the other hand, suggests that price discrimination can be socially beneficial, even if both groups are served under uniform pricing. We demonstrate this with the help of the following example. Figure \ref{ch1:fig:cs} shows the demand and monopoly prices for $D_s(p)=2000-50p$, $D_a(p)=7000-100p$, and $c=0$. \begin{figure}[h]
\center
\includegraphics[scale=1.3]{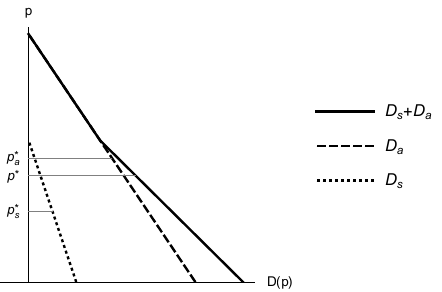}
\caption{Linear demand and monopoly prices.} 
\label{ch1:fig:cs}
\end{figure}
If the firm is allowed to price discriminate, it will charge $\$20$ to students and $\$35$ to adults, which yields a consumer surplus of $CS_s(\$20)=\$10,000$ and $CS_a(\$20)=\$61,250$. If price discrimination were prohibited, the firm would charge $\$30$ to both, which yields a consumer surplus of $CS_s(\$30)=\$ 2,500$ and $CS_a(\$30)=\$ 80,000$. Note that TCS is higher when price discrimination is prohibited. For both groups to be better off under uniform pricing, adults would have to pay between $\$7,500$ and $\$ 18,750$ to students. However, transfers are typically not implemented, which makes students worse off under uniform pricing. Going from price discrimination to uniform pricing, the normalized utility of students decreases from $0.25$ to roughly $0.06$, while it increases for adults from $0.25$ to roughly $0.33$. If both groups consist of an equal number of individuals, then total utility is higher under price discrimination and is therefore favored by SCU. In a sense, SCU penalizes adults through a lower weight on their surplus in order to account for the missing transfers. Note that if transfers would, in fact, be implemented, then SCU would agree with TCS since both satisfy \ref{ch1:SP}. Furthermore, SCU accounts for how many individuals are positively and negatively affected by the policy, whereas TCS ignores the shares of individuals in each group. If, for instance, the share of students was below roughly $0.29$, then SCU would favor the prohibition of price discrimination, as sufficiently many adults would be positively affected by the policy. SCU values individuality and has a flavor similar to the majority principle. Finally, note that SCU, unlike TCS, can be meaningfully applied even when individual utilities are not quasi-linear.

\subsection{Intertemporal Welfare}

Consider a finitely lived decision maker (DM) who consumes a good in each period up to period $n$. As in \textcite{laibson1997golden}, we assume the DM discounts consumption quasi-hyperbolically, such that the utility of a consumption sequence $(c_1,...,c_n)\in \mathbb{R}_+^{n}$ experienced in period $t$ is given by
\begin{equation}\label{ch1:quasihyper}
u_t(c_t,...,c_n)=v(c_t)+\beta\sum_{i=t+1}^n\delta^{i-t} v(c_i)
\end{equation}
for $\beta,\delta\in[0,1]$ and some monotone per-period valuation $v:\mathbb{R}_+ \to \mathbb{R}$. If $\beta<1$, the DM is dynamically inconsistent, meaning there exist sequences $(c_1,...,c_n)$, $(c_1,...,c_t,c'_{t+1},...,c'_n)\in \mathbb{R}_+^{n}$ such that the DM in Period $t$ strictly prefers $(c_1,...,c_n)$ over $(c_1,...,c_t,c'_{t+1},...,c'_n)$, while in Period $t+1$ her preference is reversed. It is as if the DM consists of different \textit{selves}, Self 1 to Self $n$, with conflicting interests. If we want to assess which of these two sequences is better for the DM overall, we need a welfare criterion that incorporates the perspective of each self. Two criteria are common in the literature, the Pareto criterion and \textit{long-run utility} \autocite{odonoghue1999doing} given by $$W_{\text{LRU}}(c_1,...,c_n):=\sum_{t=1}^n\delta^{t-1} v(c_i).$$

Before we discuss these criteria, let us first consider the case where $\beta=1$, such that the DM is dynamically consistent. It is conventional wisdom among economists that, in this case, there is no conflict between the selves, and the appropriate welfare measure is the utility of Self 1. However, this view ignores the possibility that, even though there is no conflict looking forward, there could be a conflict looking backward. For instance, the DM might regret an earlier decision and prefer they had saved more in the past while simultaneously agreeing with the earlier self on how much to save today. To formalize this idea, let each self $t$ have a vNM preference $\succcurlyeq_t$ over $\triangle (\mathbb{R}_+^{n})$. This means that Self $t$ can compare sequences that differ in the consumption levels before Period $t$. However, note that since one cannot affect one's past, $\succcurlyeq_t$ is only partially revealed. Let $\mathcal{R}^{QH}$ be the subset of $\mathcal{R}^{n}$ such that the revealed part of the individual preference can be represented by quasi-hyperbolically discounted utility as in (\ref{ch1:quasihyper}). Formally, for any $(\succcurlyeq_1,...,\succcurlyeq_n)\in \mathcal{R}^{QH}$, $(c_1,...,c_n)\in \mathbb{R}_+^{n}$ and $t\in \{1,...,n\}$, preferences of Self $t$ over lotteries that assign probability 1 on sequences starting with $(c_1,...,c_{t-1})$ can be represented by (\ref{ch1:quasihyper}). Note that the selves are in conflict unless $\succcurlyeq_i = \succcurlyeq_j$ for all $i,j\in\{1,...,n\}$. Hence, for all selves to agree, the DM would have to value past consumption more than current consumption and value it more the further it lies in the past. Since this doesn't seem very plausible, we would expect backward-looking disagreements to be common. Now that all selves have preferences over the same domain, we can view these conflicting interests from the perspective of preference aggregation. This enables us to demonstrate that taking Self 1's utility as a measure of total welfare whenever $\beta=1$ violates our Pareto condition.

\begin{proposition}\label{ch1:firstselfprop}
Let $\succcurlyeq_*$ be an aggregation rule on the restricted domain $\mathcal{R}^{QH}$, such that $\succcurlyeq^{(S,R)}_*=\succcurlyeq^R_1$ whenever $\beta=1$. Then $\succcurlyeq_*$ violates \ref{ch1:SP}.
\end{proposition}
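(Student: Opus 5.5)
It suffices to exhibit one state $(S,R)$ with $R\in\mathcal{R}^{QH}$ and $\beta=1$, together with two lotteries $x,y\in\triangle S$, such that every self weakly prefers $x$ to $y$, some self strictly prefers $x$, and yet Self 1 weakly prefers $y$. The rule must then follow $\succcurlyeq^R_1$ and rank $y\succcurlyeq_*^{(S,R)} x$, whereas \ref{ch1:SP} demands $x\succ_*^{(S,R)} y$.

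I will build this profile in three steps.

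First, fix $n\ge 2$, $\beta=\delta=1$ and a strictly increasing $v$ (say $v(c)=c$). Then $\mathcal{R}^{QH}$ pins down Self $t$'s ranking only among sequences that share the same first $t-1$ entries. On those sequences, Self $t$ ranks by $\sum_{i\ge t} c_i$.

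Second, choose each self's unrevealed preference over the whole of $\triangle(\mathbb{R}_+^n)$. Self 1 has no past, so $\succcurlyeq_1$ is fully revealed and is represented by $\sum_{i=1}^n c_i$. For $t\ge 2$, let $\succcurlyeq_t$ be represented by
\[
U_t(c)=\sum_{i<t} 2\,c_i+\sum_{i\ge t} c_i .
\]
This is a vNM preference, and its restriction to any fixed past is represented by $\sum_{i\ge t}c_i$ plus a constant. So it is consistent with (\ref{ch1:quasihyper}), and the profile lies in $\mathcal{R}^{QH}$. These preferences formalize the ``regret'' story in the text: later selves value past consumption more.

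Third, take $S=\{x,y\}$ with $x=(1,0,\dots,0)$ and $y=(0,1,0,\dots,0)$.
\begin{itemize}
\item Self 1 has $U_1(x)=1=U_1(y)$, so Self 1 is indifferent.
\item Self 2 has $U_2(x)=2>1=U_2(y)$, so Self 2 strictly prefers $x$.
\item Every self $t\ge 3$ has $U_t(x)=2=U_t(y)$, so it is indifferent.
\end{itemize}
Hence $x$ weakly Pareto dominates $y$ with one strict preference. \ref{ch1:SP} therefore requires $x\succ_*^{(S,R)} y$. But $\succcurlyeq^{(S,R)}_*=\succcurlyeq^R_1$ gives $x\sim_*^{(S,R)} y$, a contradiction. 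If a strict reversal is preferred, give $y$ slightly more period-2 consumption, e.g.\ $y=(0,1+\varepsilon,0,\dots)$ with $\varepsilon<1$, so that Self 1 strictly prefers $y$ while all later selves still strictly prefer $x$.

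The only delicate point is checking that the constructed $\succcurlyeq_t$ really belong to $\mathcal{R}^{QH}$. This requires only that, once the past is fixed, $U_t$ differs from the quasi-hyperbolic utility by an additive constant. Because the domain restriction constrains only the revealed part of preferences, this holds, and the rest is arithmetic. The case $n=1$ must be excluded: there Self 1 is the only individual, so the statement implicitly presumes $n\ge 2$.
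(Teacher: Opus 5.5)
Your main argument is correct and is essentially the paper's. Both exploit the same fact: Self~1's preference is fully revealed, while later selves' preferences over the past are unconstrained. Both then complete the profile so that Self~1 is indifferent between two sequences that Self~2 strictly ranks, with selves $t\ge 3$ tied. The only difference is the completion. The paper uses past indifference, without fixing $\delta$ or $v$, and shifts consumption from period~1 to period~2, so Self~2 prefers the later-heavy sequence. You overweight the past, so Self~2 prefers $(1,0,\dots,0)$. Either completion works.

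You should, however, delete the ``strict reversal'' remark, which is wrong on two counts:
\begin{enumerate}
\item With $y=(0,1+\varepsilon,0,\dots,0)$, every self $t\ge 3$ has $U_t(y)=2+2\varepsilon>2=U_t(x)$, so these selves prefer $y$, not $x$.
\item Once Self~1 strictly prefers $y$, the premise of \ref{ch1:SP} fails, so nothing is violated. Since Self~1 is itself one of the individuals, $\succcurlyeq^R_1$ can never strictly reverse a Pareto ranking. The indifference case that you and the paper use is the only possible kind of violation.
\end{enumerate}
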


\begin{proof}
Note that it is sufficient to show that the axiom is violated in some state. Consider the state where each self is \textit{past indifferent}, meaning for any $t\in\{2,...,n\}$, $(c_1,...,c_n),(c'_1,...,c'_{t-1},c_t,...,c_n)\in \mathbb{R}_+^{n}$, $$(c_1,...,c_n) \sim_t (c'_1,...,c'_{t-1},c_t,...,c_n).$$ Now consider two sequences $s=(c_1,...,c_n)$ and $s'=(c'_1,c'_2,c_3...,c_n)$ such that $c_1>c'_1$ and $v(c_1)+\delta v(c_2) = v(c'_1)+\delta v(c'_2)$. Then $s \sim_1 s'$, $s'\succ_2 s$ and $s \sim_t s'$ for all $t\geq 3$ and hence by \ref{ch1:SP} $s'$ should give strictly higher total welfare than $s$. If, however, $u_1$ measures total welfare of the DM, then $s$ and $s'$ are equally desirable.
\end{proof}

Let us now return to the aforementioned welfare criteria. The Pareto principle cannot compare every sequence in $\mathbb{R}_+^{n}$; hence, is incomplete and violates \ref{ch1:RA}. Long-run utility is equal to $u_1$ for $\beta=1$, and therefore, as shown by Proposition \ref{ch1:firstselfprop}, violates \ref{ch1:SP}. As an alternative to the established criteria, we propose menu-contingent utilitarianism. We assume that the social planner can distribute an initial endowment of size 1 across periods, hence $S=\{(c_1,...,c_n)\in [0,1]^n:\sum_1^n c_t \leq 1\}$. Furthermore, in order to apply our criterion, we have to make assumptions on individual preferences over histories. Note that the same is true for the Pareto principle (see \textcite{goldman1979intertemporally}). As a benchmark, we assume past indifference. Then, preferences over entire sequences are represented by (\ref{ch1:quasihyper}). Next, we have to normalize each self's utility with respect to the best and worst alternative in $S$. Without loss of generality, assume that $v(0)=0$ and let $\bar u_t=\max_{s\in S} u_t(s)$ denote the optimal allocation from the perspective of Self $t$. Then, total welfare of the DM according to menu-contingent utilitarianism is given by
$$W_{\text{MCU}}(c_1,...,c_n):=\sum_{t=1}^n \frac{1}{\bar u_t}u_t(c_t,...,c_n)=\sum_{t=1}^n \underbrace{\left(\frac{1}{\bar u_t} + \beta \sum_{i=1}^{t-1} \frac{\delta^i}{\bar u_{t-i}} \right)}_{\gamma_t} v(c_t).$$ In contrast to long-run utility, the weights $\gamma_t$ on the per period valuation are increasing in $t$. Hence, our criterion recommends an increasing consumption profile. This is because future consumption has a positive externality on earlier selves in the form of anticipatory utility, while later selves do not benefit from past consumption. Of course, this is driven by our assumption of past indifference. We leave it to future research to determine whether this assumption is plausible. Finally, note that MCU can be applied even when the DM is not a quasi-hyperbolic discounter.

\subsection{Bargaining}

Consider $n$ individuals who have the possibility to cooperate and create a surplus. In order to generate this surplus, they must agree on an alternative to be implemented. For instance, one could think of a buyer and seller bargaining over the price or a worker and an employer negotiating a wage. If the group cannot reach an agreement, each individual resorts to their respective outside options, which we call the \textit{disagreement alternative} and denote by $a_{0}$. In the following exposition, we consider bargaining through the eyes of an arbitrator who, to facilitate cooperation, has to choose an alternative for the group. Equivalently, one could phrase the selection as a consequence of fairness postulates.

The literature on axiomatic bargaining suggests that the arbitrator adheres to the following approach. The problem is first reformulated in terms of vectors of utilities, and then a bargaining solution is applied. This reformulation is discussed in more detail below. Here, we recall the bargaining approach. A \textit{bargaining problem} $(d,U)$ specifies a set of utility vectors $U \subset \mathbb{R}^n$, called the \textit{bargaining set}, and a \textit{disagreement point} $d=(d_1,...,d_n)\in U$. We assume that $U$ is compact and denote by $\mathcal{B}$ the set of all such bargaining problems. A bargaining solution $f$ assigns to each $(d,U)$ a subset of $U$, hence $f(d,U)\subseteq U$. Two bargaining solutions are prominent in the literature: the \textit{Nash bargaining solution} \autocite{nash1950bargaining}, short Nash, and the \textit{Kalai-Smorodinsky bargaining solution} \autocite{kalai1975other}, short KS. Note that both solutions require the bargaining set to be convex. Denote by $\mathcal{B}_{\text{con}}$ the subset of $\mathcal{B}$ for which $U$ is convex. Then Nash is given by $$f_{\text{Nash}}(d,U):=\argmax_{(v_1,...,v_n)\in U} \prod_{i\in N}\left(v_i-d_i\right)$$ for all $(d,U)\in \mathcal{B}_{\text{con}}$. Next, we define KS, which requires some additional notation. For any $(d,U)\in \mathcal{B}$, let $\Lambda(d,U):=\{(v_1,...,v_n)\in U:v_i\geq d_i\text{ for all }i\in N\}$. Hence, $\Lambda(d,U)$ is the set of utility vectors in $U$ that weakly Pareto-dominate $d$. Furthermore, let $\alpha_i(d,U)$ denote $i$'s maximal utility among all points in $\Lambda(d,U)$. Note that KS only applies for $n=2$. So $f_{\text{KS}}(d,U)$ is the intersection of the Pareto frontier of $U$ and the line connecting $d$ and $(\alpha_1(d,U), \alpha_2(d,U))$ for all $(d,U)\in \mathcal{B}_{\text{con}}$.

We now investigate whether these bargaining solutions could be used by an arbitrator who first evaluates the different alternatives and then selects the best according to this evaluation. Let $A$ denote the set of alternatives and note that $a_{0} \in A$. Let $\mathcal{U}$ denote the set of possible vNM utility functions over $\triangle A$. A utility profile $u\in \mathcal{U}^n$ specifies for each individual a utility function over the possible alternatives. We say that a bargaining problem $(d,U)\in \mathcal{B}$ is \textit{associated} with a utility profile $u\in \mathcal{U}^n$ (and vice versa) if $d=(u_1(a_{0}),...,u_n(a_{0}))$ and $U=\{(u_1(a),...,u_n(a)):a\in A\}.$ To ensure that any bargaining problem in $\mathcal{B}$ can be generated by some utility profile in $\mathcal{U}^n$, we require $A$ to be sufficiently rich. In line with our motivating examples, the reader might think of $A$ as the set of prices or wages, i.e., $A=\mathbb{R}_+$. We denote by $\succcurlyeq_+$ the arbitrator's aggregation rule, such that for every $u\in \mathcal{U}^n$, $\succcurlyeq_+^u$ is a binary relation over $\triangle A$. Note that we deviate from the aggregation rule as defined in Section \ref{ch1:secaxioms} in two ways. We have indicated this by the use of a different subscript. First, for notational convenience, $\succcurlyeq_+$ is a function of utility profiles rather than preference profiles. Second, $\succcurlyeq_+$ does not depend on the menu because, as we will argue later, in a bargaining setting, the menu is determined endogenously. Neither of these deviations is responsible for our subsequent finding. Furthermore, in line with Section \ref{ch1:secaxioms} we require the arbitrator to evaluate lotteries over $A$. We don't insist that randomization is feasible, but if it were feasible, the arbitrator should be able to evaluate the resulting lotteries. Finally, we connect the aggregation rule to a bargaining solution through the following definition.

\begin{definition}\label{ch1:bargain consistent}
$\succcurlyeq_+$ is \textit{consistent} with $f$ if for any $(d,U)\in \mathcal{B}_{\text{con}}$   and associated $u\in \mathcal{U}^n$, the following holds. For all $x\in \triangle A$, $$(u_1(x),...,u_n(x))\in f(d,U)\text{ if and only if }x \in \{y \in \triangle A:y \succcurlyeq_+^u z\text{ for all }z\in \triangle A\}.$$
\end{definition}

The definition captures an implicit assumption of the bargaining approach, namely that alternatives are selected solely based on their utility vectors. Hence, if some lottery's utility vector is chosen by the bargaining solution, this lottery must be among the most preferred. Conversely, if a lottery is among the most preferred, its utility vector must be chosen by the bargaining solution. We have restricted the definition to convex bargaining sets as this is the domain of the prominent bargaining solutions.

We can now investigate whether a given bargain solution is consistent with certain desirable features of our aggregation rule. We find that neither Nash nor KS is consistent with rationality. More generally, any bargaining rule that imposes ex-ante symmetry is inconsistent with the vNM postulates. This is well known; see for instance \textcite{diamond1965evaluation}, \textcite{harsanyi1975nonlinear}, \textcite{broome1984uncertainty} or \textcite{machina1989dynamic}. Nevertheless, we provide the following proposition for the sake of completeness.

\begin{proposition}
There doesn't exist an aggregation rule $\succcurlyeq_+$ that satisfies \ref{ch1:RA} and is consistent with either $f_{\text{Nash}}$ or $f_{\text{KS}}$.
\end{proposition}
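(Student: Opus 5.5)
The plan is to argue by contradiction, using one explicit two-person bargaining problem for Nash and one for KS. The key fact from \ref{ch1:RA} is that the arbitrator's relation $\succcurlyeq_+^u$ has a vNM representation $W$ on $\triangle A$. Its set of maximizers $M=\{y: W(y)\geq W(z)\ \forall z\}$ is therefore a \emph{face} of the lottery simplex. If a nondegenerate lottery $x$ is in $M$, then every pure alternative in its support is in $M$. Conversely, $M$ is convex and contains every lottery over maximizing pure alternatives.

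Consistency, as in Definition \ref{ch1:bargain consistent}, requires that $x\in M$ exactly when the utility vector $u(x)$ lies in $f(d,U)$. It is not enough to take $A=\{a_0,a_1,a_2\}$ with utilities $(0,0),(1,0),(0,1)$, because the associated $U$ would be the three points, which is not convex. Since $A$ is rich, I would instead pick a utility profile whose image $U$ is the whole convex triangle $\mathrm{conv}\{(0,0),(1,0),(0,1)\}$. Concretely, include pure alternatives $a_1,a_2$ with those extreme payoffs, plus a continuum of pure alternatives $b_\lambda$ realizing every point of the triangle, all with $d=(0,0)$ at $a_0$.

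For both Nash and KS the solution is the single point $(1/2,1/2)$. Two objects have this utility vector: the pure alternative $b$ whose payoff is $(1/2,1/2)$, and the lottery $x=\tfrac12[a_1]+\tfrac12[a_2]$. By consistency, $x\in M$. Because $M$ is a face, this forces $[a_1]\in M$. But $u(a_1)=(1,0)\notin f(d,U)$, which contradicts consistency.

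This already settles both solutions at once, since both are symmetric and select $(1/2,1/2)$ in this problem. A cleaner alternative avoids the face property. $W$ is linear and $b\in M$, so $W(a_1)+W(a_2)=2W(x)=2\max W$, which forces $W(a_1)=W(a_2)=\max W$. Hence $a_1\in M$, again a contradiction.

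The main obstacle I expect is checking that such a profile is legitimately associated with $(d,U)\in\mathcal{B}_{\text{con}}$. Every point of $U$ must be the payoff of a pure alternative, not merely of a lottery, since $U$ is defined as the image of $A$. The richness assumption on $A$ (for example, $A$ uncountable, such as $\mathbb{R}_+$) lets me assign the triangle's points to distinct pure alternatives, with vNM utilities extended linearly to lotteries. I also need to confirm from the explicit formulas that $f_{\text{Nash}}$ and $f_{\text{KS}}$ both return exactly $\{(1/2,1/2)\}$ here. For Nash, this is the maximization of $v_1v_2$ on $v_1+v_2\leq 1$. For KS, we have $\alpha=(1,1)$, and the diagonal meets the frontier at $(1/2,1/2)$.
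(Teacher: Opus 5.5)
Your proof is correct and is essentially the paper's argument: the same symmetric triangle problem, the same coin flip $\frac12[a_1]+\frac12[a_2]$ whose payoff $(1/2,1/2)$ both solutions select, and a contradiction because $[a_1]$ or $[a_2]$ is then forced to be maximal. The only difference is that you get this from linearity of the vNM representation $W$, while the paper applies the independence axiom directly. In your second version the fact actually needed is $x\in M$, which consistency gives directly, so the pure alternative $b$ with $b\in M$ plays no role.
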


\begin{proof}
We prove the proposition for $n=2$ as this is required for KS. The proof can be easily extended to a general $n$. The proof strategy goes as follows. We assume that $\succcurlyeq_+$ is consistent with a bargaining solution $f$ that satisfies Nash's \textit{symmetry} and Pareto axiom. Symmetry says that in a symmetric bargaining problem, any element of the bargaining solution must assign equal utility to all individuals. Note that both Nash and KS satisfy symmetry. We then show that $\succcurlyeq_+$ must violate \ref{ch1:RA}.

Consider the bargaining problem $((0,0),\mathcal{S})$ where $\mathcal{S}$ is the unit simplex. As $((0,0),\mathcal{S})$ is symmetric, $f$ selects $\{(0.5,0.5)\}$. See Figure \ref{ch1:fig:bargproof} for an illustration. \begin{figure}[h]
\center
\includegraphics[scale=0.55]{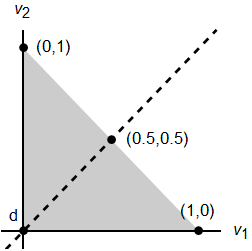}
\caption{Symmetric bargaining solution for $((0,0),\mathcal{S})$.}
\label{ch1:fig:bargproof}
\end{figure}
For any $u\in\mathcal{U}^2$ associated with $((0,0),\mathcal{S})$, there must exist two alternatives $a_1,a_2 \in A$ that generate the extreme points of $\mathcal{S}$, i.e., $(u_1(a_1), u_2(a_1))=(1,0)$ and $(u_1(a_2), u_2(a_2))=(0,1)$. Now consider the coin-flip between $a_1$ and $a_2$, denoted by $x := \frac{1}{2}[a_1] + \frac{1}{2}[a_2]$. As $(u_1( x),u_2( x))=(0.5,0.5)\in f((0,0),\mathcal{S})$, $x$ must be among the most preferred elements according to $\succcurlyeq_+^u$. Formally $x \in \{y \in \triangle A:y \succcurlyeq_+^u z\text{ for all }z\in \triangle A\}$. But neither $a_1$ nor $a_2$ are among the most preferred elements, as neither $(1,0)$ nor $(0,1)$ is selected by $f$. Therefore,
\begin{equation}\label{ch1:bargproof1}\frac{1}{2}[a_1] + \frac{1}{2}[a_2] \succ_+^u [a_1],\end{equation} 
\begin{equation}\label{ch1:bargproof2}\frac{1}{2}[a_1] + \frac{1}{2}[a_2] \succ_+^u [a_2].\end{equation}
By the vNM independence axiom, (\ref{ch1:bargproof1}) would imply $[a_2] \succ_+^u [a_1]$ and (\ref{ch1:bargproof2}) would imply $[a_1] \succ_+^u [a_2]$, a contradiction. Hence, the vNM independence axiom and consequently \ref{ch1:RA} must be violated.
\end{proof}

We believe that the arbitrator should make the decision based on a rational evaluation of the alternatives. As this is not reconcilable with the prominent bargaining solutions, we offer menu-contingent utilitarianism as an alternative. The menu arises naturally in this setting. Because an individual would not agree to an alternative worse than $a_{0}$, it is the set of alternatives that weakly Pareto-dominate $a_{0}$. Formally, given $u\in\mathcal{U}^n$ the menu is $S:=\{a\in A:u_i(a)\geq u_i(a_{0})\text{ for all }i\in N\}$. Note that a bargaining problem $(d,U)$ contains all relevant information for our rule to evaluate the utility vectors in $\Lambda(d,U)$. Hence, menu-contingent utilitarianism leads to a well-defined bargaining solution, which we abbreviated by MCU and which is given by $$f_{\text{MCU}}(d,U):=\argmax_{(v_1,...,v_n)\in \Lambda(d,U)} \sum_{i\in N}\frac{v_i-d_i}{\alpha_i(d,U)-d_i}$$ for all $(d,U)\in \mathcal{B}$. MCU selects the utility vectors that are most preferred under a menu-contingent utilitarian evaluation. Note that MCU does not require convexity of the bargaining set. The same bargaining solution has been axiomatized before by \textcite{pivato2009twofold} under the assumption of convexity of the bargaining set. A similar bargaining solution has been axiomatized by \textcite{cao1982preference} and \textcite{baris2018timing}, again assuming convexity of the bargaining set. Their solution differs from ours, as they normalize individual utilities with respect to each individual's highest utility in $U$, whereas we normalize with respect to each individual's highest utility in $\Lambda(d,U)$.

Like Nash and KS, MCU has a neat geometric interpretation, which we demonstrate with the help of Figure \ref{ch1:fig:barg}. \begin{figure}[h]
\center
\includegraphics[scale=1]{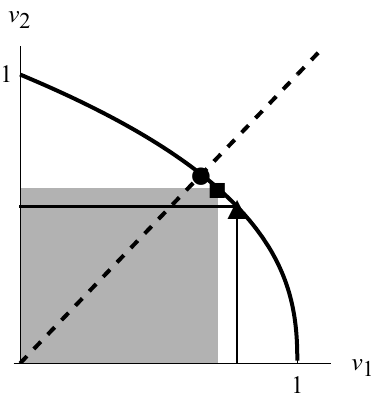}
\caption{Nash, KS and MCU.} 
\label{ch1:fig:barg}
\end{figure} The disagreement point is the origin and the curve shows the Pareto frontier of bargaining set for $u_1(a)=1-a$ and $u_2(a)=a^{\frac{2}{5}}$. KS (circle) is the intersection of the Pareto frontier and the dashed $45 \degree$ line, Nash (square) maximizes the area of the rectangle spanned by $d$ and the Pareto frontier and MCU (triangle) maximizes the circumference of the rectangle spanned by $d$ and the Pareto frontier. If the Pareto frontier is smooth, then the slope of the Pareto frontier at MCU is -1, as it would otherwise be possible to increase the utility of one individual by decreasing the utility of the other individual by a lesser amount. Hence, in this example, MCU favors Individual 1 relative to KS because the slope at KS is flatter than -1 such that Individual 1 can be made significantly better off at the cost of making Individual 2 slightly worse off.

\section{Conclusion} \label{ch1:secconclu}

We initiated this project in pursuit of a rule that is rational, benevolent, and impartial, as these desiderata are not satisfied by the rules that are typically used in applications. We learned from \textcite{arrow1950difficulty, arrow1963social} that for a rule to satisfy these desiderata, it must be sensitive to context. We offer two rules, each motivated by adhering closely to Arrow. Each rule results from slightly weakening one of Arrow's two context-restricting axioms, \ref{ch1:MI} and \ref{ch1:IIA}. Each rule makes an explicit normative judgment about what is allowed to matter. We believe that in different situations, different considerations are allowed to matter. Consequently, we find that there is no single unifying rule. This revelation could pave the way for numerous rules derived from various modifications of \ref{ch1:MI} and \ref{ch1:IIA}. However, if we permit a planner to tailor the context to a specific application, almost any social choice can be justified. To maintain discipline, we recommend selecting from our two rules, as each represents a normative judgment on one side of the spectrum.

The menu-contingent utilitarian rule satisfies \ref{ch1:IIA} and thus reflects the judgment that nothing outside the menu is allowed to matter. However, the menu itself does matter. We believe this normative consideration to be relevant for bargaining situations, as bargaining involves finding a compromise that provides each individual with a fair share of the surplus. Our understanding of fairness is relative to what can be achieved and must disregard alternatives that individuals would never agree upon. We also propose this rule for practical convenience in situations where there is no information about individual preferences outside the menu or when the setting does not impose natural bounds on individual utilities.

On the other side of the spectrum is the setting-contingent utilitarian rule. This rule satisfies \ref{ch1:MI}, hence reflects the judgment that the menu itself is not allowed to matter. Individual preferences over all possible alternatives are taken into account, independently of whether an alternative is currently feasible. We believe this normative consideration to be relevant for policymaking, where the policymaker should take a birds-eye view and assess welfare on the basis of the entire setting. In addition, policymakers often face binary decisions sequentially, namely whether to adopt a specific policy or not. Menu independence ensures that these decisions are consistent across time.

\begin{appendices}
\section{}\label{ch1:appendix main proof}

This section contains the proofs of Theorem \ref{ch1:MCE}, Theorem \ref{ch1:SCE}, Proposition \ref{ch1:propsuff}, and some additional propositions and lemmas required for the proofs of the theorems.

First off, note that \ref{ch1:SP} implies \textit{Pareto indifference}.

\begin{axiomp}{PI}[Pareto Indifference]\label{ch1:PI}
For any $(S,R)\in \Omega$ and $x,y\in \triangle S$, if $x \sim^R_i y$ for all $i\in N$, then $x \sim^{(S,R)}_* y$.
\end{axiomp}

We use this axiom for our first interim result.

\begin{proposition}\label{ch1:propadditive}
Let \ref{ch1:RA} and \ref{ch1:PI} be satisfied. Then for any $(S,R)\in\Omega$ and for any representations $(u_i)_{i\in N}$ of $R$ and $u_*$ of $\succcurlyeq^{(S,R)}_*$, there exists $(\lambda_i)_{i=0}^{n} \in\mathbb{R}^{n+1}$, such that for all $a\in S$, $$u_*(a) = \lambda_0+ \sum_{i\in N} \lambda_i u_i(a).$$
\end{proposition}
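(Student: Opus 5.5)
We prove this Harsanyi-type statement by the linear-algebra argument sketched in Section \ref{ch1:secsketch}, using only elementary facts about ranks.

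Fix $(S,R)$ with $S=\{a_1,\dots,a_m\}$ and representations $(u_i)_{i\in N}$ and $u_*$. Form the $(n+1)\times m$ pay-off matrix $M$. Its first row is all ones, and row $i+1$ is $(u_i(a_1),\dots,u_i(a_m))$. Let $v=(u_*(a_1),\dots,u_*(a_m))$. The claim is exactly that $v$ lies in the row space of $M$. By \ref{ch1:RA}, $u_*$ is an expected-utility function on $\triangle S$, so it is affine in the probability vector; the same holds for each $u_i$.

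\emph{Step 1 (duality reduction).} The vector $v$ lies in the row space of $M$ if and only if every $z\in\mathbb{R}^m$ with $Mz=0$ also satisfies $v\cdot z=0$. This is the statement that the row space is the orthogonal complement of the kernel, a finite-dimensional fact. To keep the argument self-contained, as the paper intends, we can prove it by the column-elimination route.

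First pick a maximal set $T\subseteq S$ of alternatives whose columns are linearly independent. The restriction $M_T$ then has full column rank, so its row space is all of $\mathbb{R}^{|T|}$. Hence there is $\lambda\in\mathbb{R}^{n+1}$ with $\lambda^\top M_T=v_T$; this can be shown by Gaussian elimination or a dimension count.

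Every other column $a\notin T$ is a linear combination $\sum_{b\in T}c_b\,\mathrm{col}(b)$. It then suffices to show that $u_*(a)=\sum_b c_b u_*(b)$, because in that case $\lambda^\top M$ agrees with $v$ on column $a$ as well.

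\emph{Step 2 (dependent alternatives via \ref{ch1:PI}).} Fix a dependent alternative $a$ with $\mathrm{col}(a)=\sum_{b\in T}c_b\,\mathrm{col}(b)$. The first row of $M$ gives $\sum_b c_b=1$.

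Split $T$ into $P=\{b:c_b>0\}$ and $Q=\{b:c_b<0\}$. Rearranging the column identity gives
\[
\mathrm{col}(a)+\sum_{b\in Q}|c_b|\,\mathrm{col}(b)=\sum_{b\in P}c_b\,\mathrm{col}(b).
\]
Let $K=1+\sum_{Q}|c_b|=\sum_P c_b$; the two expressions are equal because $\sum_b c_b=1$. Dividing both sides by $K$ produces two lotteries:
\[
x=\tfrac1K[a]+\sum_{Q}\tfrac{|c_b|}{K}[b],\qquad y=\sum_P \tfrac{c_b}{K}[b].
\]
Every $u_i$ assigns the same expected utility to $x$ and $y$, so $x\sim^R_i y$ for all $i$. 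By \ref{ch1:PI}, $x\sim_*^{(S,R)}y$. Since $u_*$ is affine, $u_*(x)=u_*(y)$, which rearranges to $u_*(a)=\sum_b c_b u_*(b)$.

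\emph{Step 3 (conclude).} With $\lambda$ from Step 1, we have $u_*(b)=\lambda_0+\sum_i\lambda_iu_i(b)$ for each $b\in T$. Step 2 together with linearity then extends this identity to every $a\in S$.

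The main obstacle is Step 1: showing, without citing theorems, that a full-column-rank matrix has every vector of $\mathbb{R}^{|T|}$ in its row space. We handle this by noting that the row rank equals the column rank $|T|$, so the rows span a $|T|$-dimensional subspace of $\mathbb{R}^{|T|}$, which is the whole space. Alternatively, we can select $|T|$ linearly independent rows, which form an invertible square matrix, and solve for $\lambda$ using those rows while setting the remaining entries of $\lambda$ to zero.

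A minor edge case is $S$ being a singleton. There the constant row alone suffices, with $\lambda_0=u_*(a)$.
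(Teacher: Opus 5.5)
Your proposal is correct and follows essentially the same argument as the paper. The paper also reduces the pay-off matrix (with its all-ones row) to a linearly independent set of columns, matches $u_*$ on those columns, and, for each dependent alternative, uses the column identity and the all-ones row to build two lotteries between which every individual is indifferent; \ref{ch1:PI} then forces the same linear relation for $u_*$. Your $x$ and $y$ are exactly the paper's $\gamma^a_-$ and $\gamma^a_+$, and the only cosmetic difference is that you choose a maximal independent column set directly rather than separating a full-rank case and dropping dependent columns one at a time.
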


\begin{proof}
Fix $(S,R)\in\Omega$. We denote the alternatives in $S$ by $a_1$ to $a_m$ where $m=|S|$. The pay-off vector $\vec{u}_i := (u_i(a_1),...,u_i(a_m))$ describes individual $i's$ vNM preferences over $S$. By \ref{ch1:RA}, the representation $u_*$ of $\succcurlyeq^{(S,R)}_*$ must be an expected utility representation and is therefore fully described by a pay-off vector as well. We denote this vector by $\vec{u}_*:=(u_*(a_1),...,u_*(a_m))$. Let \begin{align}\nonumber
    M &:= \begin{bmatrix}
    (1,...,1)\\
           \vec{u}_1 \\
           \vdots \\
           \vec{u}_n
         \end{bmatrix}
\end{align}
be the \textit{pay-off matrix}, which describes individual preferences over $S$. Note that the proposition states that $\vec{u}_*$ is equal to some linear combination of the rows in $M$. We distinguish two cases. In the first case, individual preferences are sufficiently diverse, such that there are $m$ linearly independent rows in $M$, which span the entire $m$-fold vector space. Hence, a linear combination of rows in $M$ equal to $\vec{u}_*$ exists trivially, even without invoking \ref{ch1:PI}. Next, consider the case where individual preferences are not sufficiently diverse, such that not every vector of length $m$ can be expressed as a linear combination of rows. In this case, the maximal number of linearly independent rows is strictly below $m$. Hence, the maximal number of linearly independent columns must be strictly below $m$ as well. As $M$ has $m$ columns, linearly dependent columns must exist. Sequentially drop linearly dependent columns from $M$ until all remaining columns are linearly independent. Denote the set of alternatives associated with the remaining columns by $B$. We call the alternatives in $B$ the \textit{independent alternatives} and the alternatives in $S\setminus B$ the \textit{dependent alternatives}. Note that in the resulting matrix, the rows span the entirety of the reduced vector space. Hence, by dropping the dependent alternatives, preferences are again sufficiently diverse to express any utility vector for the independent alternatives. This means that we can find a linear combination of rows of $M$ that matches $\vec{u}_*$ in the utilities for the independent alternatives. Now consider the dependent alternatives. For each $a\in S\setminus B$, there must exist a linear combination of columns $\gamma^a:B\to\mathbb{R}$ such that
\begin{equation}\label{ch1:lincomb}
\sum_{b\in B} \gamma^a(b)\vec{u}(b) = \vec{u}(a).
\end{equation} We will now show that each $\gamma^a$ can be decomposed into two lotteries $\gamma^a_+,\gamma^a_- \in \triangle S$ such that every individual is indifferent between these lotteries. Let $B^a_+:=\{b\in B:\gamma^a(b)\geq 0\}$ and $B^a_-:=\{b\in B:\gamma^a(b)< 0\}$. Since by definition the first row of $M$ consists only of $1$'s, $\sum_{b\in B} \gamma^a(b)=1$ and furthermore $k:=\sum_{b\in B^a_+} \gamma^a(b)=1-\sum_{b\in B^a_-} \gamma^a(b)$. Now define two lotteries 
\begin{align}\nonumber
\gamma^a_+(b) := \begin{cases}
  \frac{1}{k} \gamma^a(b)  & b\in B^a_+ \\
  0 & b\notin B^a_+
\end{cases} && \text{and} && \gamma^a_-(b) := \begin{cases}
  -\frac{1}{k} \gamma^a(b)  & b\in B^a_- \\
  \frac{1}{k} & b=a\\
    0 & b\notin B^a_- \cup \{a\}
\end{cases}
\end{align}
Note that for each $i\in N$, $\sum_{b\in S} \gamma^a_+(b) u_i(b) = \sum_{b\in S} \gamma^a_-(b) u_i(b)$, meaning each individual is indifferent between the two lotteries. By \ref{ch1:PI}, the planner must be indifferent as well. Therefore, $\sum_{b\in S} \gamma^a_+(b) u_*(b) = \sum_{b\in S} \gamma^a_-(b) u_*(b)$ and furthermore 
\begin{equation}\label{ch1:lincombplanner}
\sum_{b\in B} \gamma^a(b)u_*(b) = u_*(a).
\end{equation}
Compare this to Equation (\ref{ch1:lincomb}). The planner's utility for any dependent alternatives is determined from the independent alternatives in the same way as for every individual. Therefore, if a linear combination of rows of $M$ matches the planner's utilities for the independent alternatives, it must also match the utilities for the dependent alternatives.
\end{proof}

We say that a state $(S,R)\in \Omega$ is \textit{polar} if (i) $S$ has exactly $n$ elements, which we denote by $p_1$ to $p_n$, (ii) for every $i\in N$, $p_j \sim^R_i p_{k}$ for all $j,k\in N\setminus\left\{i\right\}$ and (iii) either $p_i \succ^R_i p_j$ for all $j\in N\setminus\left\{i\right\}$ or $p_i \sim^R_i p_{j}$ for all $j\in N\setminus\left\{i\right\}$. We say that $p_i$ is $i$'s \textit{polar alternative}.

\begin{proposition}\label{ch1:proppolar}
Let \ref{ch1:RA}, \ref{ch1:SP}, \ref{ch1:AN}, \ref{ch1:IIA} and \ref{ch1:MICA} be satisfied. Then for any polar state $(S,R)\in\Omega$, if $i,j\in N$ have a strict preference on $S$, then $p_i \sim_*^{(S,R)} p_j$.
\end{proposition}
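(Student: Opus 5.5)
The plan is to reproduce the anonymity/neutrality argument from Section~\ref{ch1:secsketch}. Since \ref{ch1:NE} is not among the axioms, I will obtain the needed relabeling invariance in polar states from \ref{ch1:MICA}, \ref{ch1:PI} and \ref{ch1:IIA}, using duplicate alternatives.

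Fix a polar state $(S,R)$ and $i,j$ with strict preferences on $S$. By \ref{ch1:RA} the evaluation is complete, so it suffices to show that $p_i \succcurlyeq_*^{(S,R)} p_j$ implies $p_j \succcurlyeq_*^{(S,R)} p_i$. Since $|A|\geq 2n+4 > n+2$, I pick two alternatives $q_i,q_j\notin S$ and set $S'':=(S\setminus\{p_i,p_j\})\cup\{q_i,q_j\}$ and $T:=S\cup\{q_i,q_j\}$.

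\emph{Step 1 (copy lemma).} Let $Q\in\mathcal{R}^n$ be a profile under which every individual is indifferent between $a$ and $a'$ for two alternatives $a\in T$, $a'\notin T$, and put $T':=T\cup\{a'\}$. Then $a'$ is comparable relative to $T$, and $a$ is comparable relative to $T'\setminus\{a\}$. By \ref{ch1:MICA}, the evaluation on $T'$ agrees with that on $T$ and with that on $T'\setminus\{a\}$. By \ref{ch1:PI}, $a\sim_* a'$ in $T'$. Hence the evaluation of $(T'\setminus\{a\},Q)$ is the evaluation of $(T,Q)$ with $a$ relabeled as $a'$. Applying this once per relabeled alternative, and using \ref{ch1:IIA} so that only preferences on the menu matter, I get \emph{restricted neutrality}. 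Let $(S,R)$ be polar and let $\hat R$ agree on $S''$ with the transport of $R$ under $p_i\mapsto q_i$, $p_j\mapsto q_j$. Then $p_i\succcurlyeq_*^{(S,R)}p_j$ if and only if $q_i\succcurlyeq_*^{(S'',\hat R)}q_j$. To prove this, I construct one profile that makes $q_i$ a copy of $p_i$ and $q_j$ a copy of $p_j$, and pass through $T$.

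\emph{Step 2 (swap via copies).} I define $Q$ to agree with $R$ on $S$, and on $T$ to satisfy the following:
\begin{itemize}
\item every individual is indifferent between $p_i$ and $q_j$, and between $p_j$ and $q_i$;
\item all alternatives of $S\setminus\{p_i,p_j\}$ are ranked as in $R$.
\end{itemize}
This is consistent: individual $i$ ranks $p_i\sim q_j$ top and everything else bottom, $j$ ranks $p_j\sim q_i$ top, and every other $k$ ranks only $p_k$ top. On $S''$, $Q$ is the polar structure of $R$ with $i$ and $j$ swapped, since now $i$'s top is $q_j$. Every alternative in $T\setminus S$ is comparable relative to $S$, and every alternative in $T\setminus S''$ is comparable relative to $S''$. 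So by \ref{ch1:MICA}, the evaluation on $T$ agrees with $(S,Q)$ on $S$ and with $(S'',Q)$ on $S''$, and by \ref{ch1:PI}, $p_i\sim_* q_j$ and $p_j\sim_* q_i$ in $T$. With \ref{ch1:IIA} giving $(S,Q)=(S,R)$, the chain is
\[
p_i\succcurlyeq_*^{(S,R)}p_j \;\Rightarrow\; q_j\succcurlyeq_*^{(S'',Q)}q_i .
\]
Now let $Q^\sigma$ swap individuals $i$ and $j$. By \ref{ch1:AN}, $q_j\succcurlyeq_*^{(S'',Q^\sigma)}q_i$. On $S''$, $Q^\sigma$ is exactly the transport of $R$ under $p_i\mapsto q_i$, $p_j\mapsto q_j$. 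So Step 1 gives $p_j\succcurlyeq_*^{(S,R)}p_i$, as required.

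The main obstacle is Step 2: checking that the combined profile $Q$ is a legitimate vNM profile on $A$, and that all the comparability hypotheses of \ref{ch1:MICA} hold in both directions (removing $\{q_i,q_j\}$ and removing $\{p_i,p_j\}$). In particular, one must handle individuals $k\neq i,j$ who are totally indifferent on $S$, where comparability is trivial. One must also ensure the count $|A|\geq|T|$ suffices for Step 1, which needs up to two further fresh alternatives. I would handle those by using the remaining slack in $|A|\geq 2n+4$, or by running Step 1 inside $T$ itself.
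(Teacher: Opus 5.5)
Your argument is correct and is essentially the paper's proof. Like you, the paper combines \ref{ch1:AN} with a ``straight copy'' profile ($q_i\sim p_i$, $q_j\sim p_j$) and a ``cross copy'' profile ($q_i\sim p_j$, $q_j\sim p_i$), moving through $S\cup\{q_i,q_j\}$ via \ref{ch1:MICA}, \ref{ch1:SP}/\ref{ch1:RA} and \ref{ch1:IIA}; the only difference is that it swaps individuals $i$ and $j$ first rather than in the middle, and the obstacles you flag are non-issues because exact copies make comparability hold for every individual and only the two outside alternatives $q_i,q_j$ are needed.
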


\begin{proof}
Assume $(S,R)\in\Omega$ is polar and $i,j\in N$ have strict preferences on $S$. If $i=j$, the proposition is trivially satisfied, so assume $i\neq j$. We assume $p_i \succcurlyeq^{(S,R)}_* p_j$ and show that $p_j \succcurlyeq^{(S,R)}_* p_i$ is implied, which only leaves $p_i \sim^{(S,R)}_* p_j$. We prove the proposition by going through a sequence of preference profiles and menus. We use Roman numerals as subscripts to keep track of the different preference profiles and menus. Let $R_{\I}\in\mathcal{R}^n$ denote the permutation of $R$ where only preferences of $i$ and $j$ are permuted. By \ref{ch1:AN}, $p_i \succcurlyeq^{(S,R_{\I})}_* p_j$. Let $R_{\II}\in\mathcal{R}^n$ denote a preference profile, which agrees with $R_{\I}$ on $S$ and where there is $q_i,q_j \in S^c$ such that $q_i \sim^{R_{\II}}_k p_i$ and $q_j \sim^{R_{\II}}_k p_j$ for all $k\in N$. By our assumption that $A$ has at least $2n+4$ elements, such alternatives must exist. By \ref{ch1:IIA}, $p_i \succcurlyeq^{(S,R_{\II})}_* p_j$. Let $S_{\I}:=S \cup \{q_i,q_j\}$. Then $p_i \succcurlyeq^{(S_{\I},R_{\II})}_* p_j$ by \ref{ch1:MICA}. Furthermore, $p_i \sim^{(S_{\I},R_{\II})}_* q_i$ and $p_j \sim^{(S_{\I},R_{\II})}_* q_j$ by \ref{ch1:SP} and hence $q_i \succcurlyeq^{(S_{\I},R_{\II})}_* q_j$ by \ref{ch1:RA}. Let $S_{\II}:= S_{\I} \setminus \{p_i,p_j\}$. Then $q_i \succcurlyeq^{(S_{\II},R_{\II})}_* q_j$ by \ref{ch1:MICA}. Let $R_{\III}\in\mathcal{R}^n$ denote a preference profile which agrees with $R_{\II}$ on $S_{\II}$ and where $q_i \sim^{R_{\III}}_k p_j$ and $q_j \sim^{R_{\III}}_k p_i$ for all $k\in N$. Then $q_i \succcurlyeq^{(S_{\II},R_{\III})}_* q_j$ by \ref{ch1:IIA}, $q_i \succcurlyeq^{(S_{\I},R_{\III})}_* q_j$ by \ref{ch1:MICA} and $p_j \succcurlyeq^{(S_{\I},R_{\III})}_* p_i$ by \ref{ch1:SP} and \ref{ch1:RA}. Furthermore, $p_j \succcurlyeq^{(S,R_{\III})}_* p_i$ by \ref{ch1:MICA}. Note that $R_{\III}$ and $R$ agree on $S$. Therefore, $p_j \succcurlyeq^{(S,R)}_* p_i$ by \ref{ch1:IIA}, which implies $p_i \sim^{(S,R)}_* p_j$.\end{proof}

We now show two properties of binary relations that we will need for the proofs of the theorems.

\begin{lemma}\label{ch1:lemmaagree}
Let $\succcurlyeq$, $\succcurlyeq'$ and $\succcurlyeq''$ be binary relations over $\triangle S$. If $\succcurlyeq$ and $\succcurlyeq'$ agree on $B\subseteq S$ and $\succcurlyeq'$ and $\succcurlyeq''$ agree on $C\subseteq S$, then $\succcurlyeq$ and $\succcurlyeq''$ agree on $B \cap C$.
\end{lemma}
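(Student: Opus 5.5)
The argument is a direct unwinding of the definition of agreement. We fix arbitrary lotteries $x,y\in\triangle(B\cap C)$ and show that $x \succcurlyeq y$ if and only if $x \succcurlyeq'' y$. The one observation that does any work is the set inclusion $\triangle(B\cap C)\subseteq \triangle B$ and $\triangle(B\cap C)\subseteq\triangle C$. Here we view $\triangle(B\cap C)$ as the lotteries in $\triangle S$ whose support lies in $B\cap C$. Any such lottery assigns zero probability outside $B\cap C$, so it also assigns zero probability outside $B$ and outside $C$.

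With this inclusion, the proof is a chain of two equivalences. Since $x,y\in\triangle B$ and $\succcurlyeq$ and $\succcurlyeq'$ agree on $B$, we have $x\succcurlyeq y$ if and only if $x\succcurlyeq' y$. Since $x,y\in\triangle C$ and $\succcurlyeq'$ and $\succcurlyeq''$ agree on $C$, we have $x\succcurlyeq' y$ if and only if $x\succcurlyeq'' y$. Transitivity of ``if and only if'' then gives $x\succcurlyeq y \iff x\succcurlyeq'' y$. As $x,y$ were arbitrary in $\triangle(B\cap C)$, the relations $\succcurlyeq$ and $\succcurlyeq''$ agree on $B\cap C$.

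There is no real obstacle. The only point that needs care is the notational one of reading $\triangle B$ for $B\subseteq S$ as a subset of $\triangle S$, so that the restriction inclusions hold. In the degenerate case $B\cap C=\emptyset$, $\triangle(B\cap C)$ is empty and the statement holds vacuously.
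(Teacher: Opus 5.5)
Your proof is correct and follows the paper's argument: fix $x,y\in\triangle(B\cap C)$ and chain the biconditional from agreement on $B$ with the one from agreement on $C$. Your explicit remarks on the inclusion $\triangle(B\cap C)\subseteq\triangle B\cap\triangle C$ and on the vacuous case $B\cap C=\emptyset$ only spell out what the paper leaves implicit.
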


\begin{proof}
Consider any $B,C\subseteq S$ s.t. $B\cap C \neq \emptyset$ and any $x,y\in \triangle (B \cap C)$. If $\succcurlyeq$ and $\succcurlyeq'$ agree on $B$, then $x\succcurlyeq y$ if and only if $x\succcurlyeq' y$, and if $\succcurlyeq'$ and $\succcurlyeq''$ agree on $C$, then $x\succcurlyeq' y$ if and only if $x\succcurlyeq'' y$. Hence, for any $x,y\in \triangle (B \cap C)$, $x\succcurlyeq y$ if and only if $x\succcurlyeq'' y$, meaning $\succcurlyeq$ and $\succcurlyeq''$ agree on $B \cap C$.
\end{proof}

\begin{lemma}\label{ch1:lemmathree}
Let $\succcurlyeq$ be a binary relation over $\triangle S$ satisfying \ref{ch1:RA}. For any $u:S\to \mathbb{R}$, if for each distinct $b,c,d \in S$ the part of $\succcurlyeq$ on $\triangle \{b,c,d\}$ is represented by $u$, then $\succcurlyeq$ is represented by $u$.
\end{lemma}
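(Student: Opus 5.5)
Since $\succcurlyeq$ satisfies \ref{ch1:RA}, it has an expected-utility representation $v:S\to\mathbb{R}$, extended linearly to $\triangle S$. It suffices to show that $u$ is a positive affine transformation of $v$, i.e.\ $u=\alpha v+\beta$ with $\alpha>0$. Then $u$, extended linearly, represents $\succcurlyeq$ on all of $\triangle S$. We read the hypothesis ``represented by $u$'' as meaning that the linear extension of $u$ represents the restriction to $\triangle\{b,c,d\}$. If $|S|\le 3$ the claim is immediate, possibly after repeating elements, or we note that the hypothesis is vacuous only when $|S|<3$, which we treat separately. So assume $|S|\ge 3$.

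For each triple $T=\{b,c,d\}$, both $v|_T$ and $u|_T$ are expected-utility representations of the same vNM preference on $\triangle T$. By vNM uniqueness, $u|_T=\alpha_T v|_T+\beta_T$ with $\alpha_T>0$, provided $\succcurlyeq$ is not indifferent on $T$. If it is indifferent on $T$, then $u$ and $v$ are both constant on $T$.

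Case 1: $\succcurlyeq$ is indifferent on all of $S$. Then every triple is indifferent, so $u$ is constant on every triple and hence on $S$, and it represents $\succcurlyeq$.

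Case 2: there exist $b,c\in S$ with $v(b)\neq v(c)$. Fix this pair. For any $d\in S\setminus\{b,c\}$, the triple $\{b,c,d\}$ is non-indifferent. Hence $u(d)=\alpha_d v(d)+\beta_d$, and on $b$ and $c$ we also have $u(b)=\alpha_d v(b)+\beta_d$ and $u(c)=\alpha_d v(c)+\beta_d$. Since $v(b)\neq v(c)$, these two equations determine $\alpha_d$ and $\beta_d$ uniquely: $\alpha_d=(u(b)-u(c))/(v(b)-v(c))$, and $\beta_d$ follows. So they do not depend on $d$; call them $\alpha>0$ and $\beta$. Thus $u=\alpha v+\beta$ on all of $S$.

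The only delicate points are the degenerate cases: a triple on which $\succcurlyeq$ is indifferent, and the bookkeeping showing that the affine constants agree across triples. The trick is to anchor every triple on a fixed strictly ranked pair $\{b,c\}$, which makes the constants independent of the third element.
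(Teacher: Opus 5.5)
Your argument is correct and is essentially the paper's. Both proofs anchor on a fixed strictly ranked pair $b,c$ and apply vNM uniqueness on each triple $\{b,c,e\}$. The paper phrases this as normalizing a global representation $\hat u$ to agree with $u$ on one triple and deducing $\alpha=1$, $\beta=0$, while you show directly that the affine constants cannot depend on $e$.

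Your aside about $|S|<3$ should be replaced by simply assuming $|S|\ge 3$, as the paper implicitly does and as is the only case it uses. Under the literal ``distinct'' reading the hypothesis is vacuous when $|S|<3$, and the statement can then fail, so that case cannot be ``treated separately''.
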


\begin{proof}
Let $u$ satisfy the premise above. Select $b,c,d\in S$ such that $b\succ c$. If this is not possible, then $\succcurlyeq$ must be totally indifferent on $S$, in which case the proof is trivial. By assumption, $u$ represents $\succcurlyeq$ on $\triangle \{b,c,d\}$. Let $\hat u:S\to \mathbb{R}$ denote a representation of $\succcurlyeq$ where $\hat u(a)=u(a)$ for all $a\in\{b,c,d\}$. Now consider any $e\in S\setminus \{b,c,d\}$. By assumption, both $u$ and $\hat u$ represent $\succcurlyeq$ on $\triangle \{b,c,e\}$. Since vNM representations are unique up to a positive affine transformation, there must exist $\alpha \in \mathbb{R}_+$ and $\beta \in \mathbb{R}$ such that $u(a) = \alpha \hat u(a) + \beta$ for all $a\in\{b,c,e\}$. As $\hat u(b)=u(b)$ and $\hat u(c)=u(c)$, we find that $(1-\alpha)u(b)=(1-\alpha)u(c)$ which further implies $\alpha=1$ and $\beta=0$. Hence, $\hat u(e)=u(e)$. As this holds true for any $e\in S\setminus \{b,c,d\}$, we find that $\hat u = u$ and hence $u$ represents $\succcurlyeq$.
\end{proof}
\ \\
\noindent \textit{Proof of Theorem \ref{ch1:MCE}} \nopagebreak[4] \\
Let \ref{ch1:RA}, \ref{ch1:SP}, \ref{ch1:AN}, \ref{ch1:IIA} and \ref{ch1:MICA} be satisfied and assume that $|A|\geq 2n+4$. We will prove that for each $(S,R)\in \Omega$, $\succcurlyeq^{(S,R)}_*$ is represented by $$\sum_{i\in N} u_i^{S,R}.$$ 
Fix $(S,R) \in \Omega$ where $|S|\geq 3$. Later, we consider the case $|S|=2$. We will show that for any distinct $b,c,d\in S$, the part of $\succcurlyeq^{(S,R)}_*$ on $\triangle\{b,c,d\}$ is represented by $\sum_{i\in N} u^{S,R}_i$. It then follows from Lemma \ref{ch1:lemmathree} that $\succcurlyeq^{(S,R)}_*$ is represented by $\sum_{i\in N} u^{S,R}_i$. The proof is done in three steps. First, we define a sequence of states, connecting $(S,R)$ to a polar state. Second, we show that in the final state of that sequence, social preferences on $\triangle\{b,c,d\}$ are represented by $\sum_{i\in N} u^{S,R}_i$. Third, we show that social preferences in the final state agree with the social preferences in the initial state on $\{b,c,d\}$.

We begin by defining the aforementioned sequence of states. We use Roman numerals as subscripts to keep track of the different preference profiles and menus. If $S\neq A$, then define $S_{\I}:=S$ and denote an arbitrary alternative in $S^c$ by $e$. If, on the other hand, $S = A$, we construct $S_{\I}$ in the following way. As $|A|\geq 2n+4$, $S \setminus \{b,c,d\}$ must have at least $2n+1$ alternatives. Consequently, at least one alternative $e\in S \setminus \{b,c,d\}$ must be comparable relative to $S \setminus \{e\}$ under $R$. So let $S_{\I}:=S\setminus \{e\}$. Note that either way, there is an alternative $e$ in $S^c_{\I}$. Let $R_{\I} \in \mathcal{R}^n$ denote a preference profile that agrees with $R$ on $S_{\I}$ and where for each $i\in N$, $e \sim^{R_{\I}}_i a$ for all $a\in\{a\in S_{\I}:a\succcurlyeq^{R_{\I}}_i a'\text{ for all }a'\in S_{\I}\}$. Let $S_{\II} := S_{\I} \cup \{e\}$. Next, identify the smallest subset $S_{\III}\subseteq S_{\II}$ with the property that $\{b,c,d,e\}\subseteq S_{\III}$ and every $a\in S_{\II}\setminus S_{\III}$ is comparable relative to $S_{\III}$ under $R_{\I}$. Note that there are at most $n+4$ alternatives in $S_{\III}$, namely $b,c,d,e,$ and $n$ alternatives that are each worst for exactly one individual. Hence, as $|A|\geq 2n+4$, there are at least $n$ alternatives in $S^c_{\III}$. For any state $\omega\in \Omega$, we denote the set of individuals that are not totally indifferent on the menu by $N_\succ^{\omega}$. Let $R_{\II} \in \mathcal{R}^n$ denote a preference profile that agrees with $R_{\I}$ on $S_{\III}$ and where there exists $P \subseteq S_{\III}^c$ such that (i) $(P,R_{\II})$ is polar, where $p_i\in P$ denotes $i$'s polar alternative, (ii) for each $i\in N$, $i\in N_\succ^{(P, R_{\II})}$ if and only if $i\in N_\succ^{(S_{\III}, R_{\II})}$ and (iii) for each $i\in N$, $p_i \sim^{R_{\II}}_i a$ for all $a\in\{a\in S_{\III}:a\succcurlyeq^{R_{\II}}_i a'\text{ for all }a'\in S_{\III}\}$ and $p_j \sim^{R_{\II}}_i a$ for all $a\in\{a\in S_{\III}:a'\succcurlyeq^{R_{\II}}_i a\text{ for all }a'\in S_{\III}\}$ and $j\in N\setminus \{i\}$. This concludes the construction of the sequence of states.

Next, we show that the part of $\succcurlyeq_*^{(P\cup S_{\III},R_{\II})}$ on $\triangle\{b,c,d\}$ is represented by $\sum_{i\in N} u^{S,R}_i$. By Proposition \ref{ch1:propadditive}, there exists weights $(\lambda_i)_{i=0}^n\in\mathbb{R}^{n+1}$ such that $\succcurlyeq_*^{(P\cup S_{\III},R_{\II})}$ is represented by $$\lambda_0+ \sum_{i\in N} \lambda_i u^{P\cup S_{\III},R_{\II}}_i.$$ By Proposition \ref{ch1:proppolar}, $p_i \sim_*^{(P,R_{\II})} p_j$ for all $i,j\in N_\succ^{{(P,R_{\II})}}$. Because every alternative in $S_{\III}$ is comparable relative to $P$ under $R_{\II}$, by \ref{ch1:MICA} $p_i \sim_*^{(P\cup S_{\III},R_{\II})} p_j$ for all $i,j\in N_\succ^{{(P\cup S_{\III},R_{\II})}}$ as well. This implies $\lambda_i=\lambda_j=:\lambda$ for all $i,j\in N_\succ^{{(P\cup S_{\III},R_{\II})}}$. Note that if $i\notin N_\succ^{{(P\cup S_{\III},R_{\II})}}$, then $u_i^{P\cup S_{\III},R_{\II}}$ is constant on $P\cup S_{\III}$, and $\lambda_i$ can be normalized to $\lambda$ as the planner's vNM representation is unique up to positive affine transformations. Similarly, $\lambda_0$ can be normalized to 0. We now show that $\lambda>0$ and, therefore, $\lambda$ can be normalized to 1. We distinguish three cases. First, consider the case where $N_\succ^{{(P\cup S_{\III},R_{\II})}}=N$ and for any alternative $a\in S_{\III}$ there exists a lottery $x_a\in\triangle P$ such that $a \sim^{R_{\II}}_i x_a$ for all $i\in N$. Then $\succcurlyeq^{(P\cup S_{\III},R_{\II})}_*$ is totally indifferent on $P\cup S_{\III}$ and any $\lambda$ represents the same preferences. Second, consider the case where $N_\succ^{{(P\cup S_{\III},R_{\II})}}=N$ and for some alternative $a\in S$ such a lottery does not exist. Depending on whether $\sum_{i=1}^n u_i^{S_{\III},R_{\II}}(a)$ is greater or smaller than 1, one can construct a lottery over $P$ that either Pareto dominates $a$ or is Pareto dominated by $a$. Then for \ref{ch1:SP} to be satisfied, $\lambda$ has to be strictly positive. Third, if $N_\succ^{{(P\cup S_{\III},R_{\II})}}\neq N$ then for \ref{ch1:SP} to be satisfied, $p_i \succ^{{(P\cup S_{\III},R_{\II})}}_* p_j$ for any $i\in N_\succ^{{(P\cup S_{\III},R_{\II})}}$ and $j\notin N_\succ^{{(P\cup S_{\III},R_{\II})}}$. This requires $\lambda$ to be strictly positive as well. In any of the three cases, $\lambda$ can be normalized to 1. Hence, we have shown that $\succcurlyeq_*^{(P\cup S_{\III},R_{\II})}$ is represented by $$\sum_{i\in N} u^{P\cup S_{\III},R_{\II}}_i.$$ Finally, note that we have constructed $(P\cup S_{\III},R_{\II})$ in such a way that for each $i\in N$, $u_i^{P\cup S_{\III},R_{\II}}(a)=u_i^{S,R}(a)$ for all $a\in\{b,c,d\}$. Hence, we have shown that the part of $\succcurlyeq_*^{(P\cup S_{\III},R_{\II})}$ on $\triangle\{b,c,d\}$ is indeed represented by $\sum_{i\in N} u^{S,R}_i.$

In the third and final step, we show that $\succcurlyeq_*^{(P\cup S_{\III},R_{\II})}$ and $\succcurlyeq_*^{(S,R)}$ agree on $\{b,c,d\}$. \begin{align*} 
\succcurlyeq_*^{(S,R)}&\text{ and } \succcurlyeq_*^{(S_{\I},R)} \text{ agree on } \{b,c,d\}. \text{ (\ref{ch1:MICA})} \\ 
\succcurlyeq_*^{(S_{\I},R)} &\text{ and } \succcurlyeq_*^{(S_{\I},R_{\I})} \text{ agree on } \{b,c,d\}. \text{ (\ref{ch1:IIA})} \\ 
\succcurlyeq_*^{(S_{\I},R_{\I})} &\text{ and } \succcurlyeq_*^{(S_{\II},R_{\I})} \text{ agree on } \{b,c,d\}. \text{ (\ref{ch1:MICA})} \\ 
\succcurlyeq_*^{(S_{\II},R_{\I})} &\text{ and }\succcurlyeq_*^{(S_{\III},R_{\I})} \text{ agree on } \{b,c,d\}. \text{ (\ref{ch1:MICA})} \\ 
\succcurlyeq_*^{(S_{\III},R_{\I})}&\text{ and } \succcurlyeq_*^{(S_{\III},R_{\II})}\text{ agree on } \{b,c,d\}. \text{ (\ref{ch1:IIA})} \\ 
\succcurlyeq_*^{(S_{\III},R_{\II})}&\text{ and } \succcurlyeq_*^{(P\cup S_{\III},R_{\II})} \text{ agree on } \{b,c,d\}. \text{ (\ref{ch1:MICA})}
\end{align*} Hence, by Lemma \ref{ch1:lemmaagree} $\succcurlyeq_*^{(P\cup S_{\III},R_{\II})}$ and $\succcurlyeq_*^{(S,R)}$ agree on $\{b,c,d\}$. This concludes the proof for $|S|\geq 3$.

Now consider $|S|=2$. There must at least be $2n+2$ alternatives in $S^c$. We consider a different profile $R_{\I}$ where there is a set of polar alternatives $P$ outside of $S$. Then $\succcurlyeq_*^{(P\cup S,R_{\I})}$ must be represented by equal weights by the above argument. By \ref{ch1:MICA}, $\succcurlyeq_*^{(S,R_{\I})}$ must be represented by equal weights and by \ref{ch1:IIA} $\succcurlyeq_*^{(S,R)}$ must be represented by equal weights. This concludes the proof of Theorem \ref{ch1:MCE}.\\

Next, we derive an interim result required for the proof of Theorem \ref{ch1:SCE}.

\begin{proposition}\label{ch1:proppolar2}
Let \ref{ch1:RA}, \ref{ch1:SP}, \ref{ch1:AN}, \ref{ch1:MI}, and \ref{ch1:IICA} be satisfied. Then for any polar $(S,R)\in\Omega$, if every $a\in S^c$ is comparable relative to $S$ under $R$ and $i,j\in N$ have a strict preference on $S$, then $p_i \sim_*^{(S,R)} p_j$.
\end{proposition}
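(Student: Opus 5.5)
The plan mirrors the proof of Proposition \ref{ch1:proppolar}, replacing each use of \ref{ch1:IIA} by \ref{ch1:IICA} and each use of \ref{ch1:MICA} by \ref{ch1:MI}. Because \ref{ch1:MI} holds for arbitrary sub-menus, the menu-changing steps become easier. The price is that every profile change must keep all alternatives outside the current menu comparable, so that \ref{ch1:IICA} applies with $C$ equal to the complement of the current menu.

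Assume $(S,R)$ is polar, every $a\in S^c$ is comparable relative to $S$ under $R$, $i\neq j$ both have strict preferences on $S$, and $p_i \succcurlyeq^{(S,R)}_* p_j$. The steps are as follows.

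\emph{Step 1 (anonymity).} Let $R_{\I}$ swap the preferences of $i$ and $j$. By \ref{ch1:AN}, $p_i \succcurlyeq^{(S,R_{\I})}_* p_j$. Comparability of each $a\in S^c$ relative to $S$ is a condition imposed individual by individual, so it is preserved under this permutation.

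\emph{Step 2 (introduce copies).} Using $|A|\geq 2n+4$, pick $q_i,q_j\in S^c$. Let $R_{\II}$ agree with $R_{\I}$ on $S$, with $q_i\sim^{R_{\II}}_k p_i$ and $q_j\sim^{R_{\II}}_k p_j$ for all $k$. Every other alternative outside $S$ is made indifferent, for each $k$, to some lottery over $S\setminus\{p_i,p_j\}$ together with $q_i,q_j$; this is possible because $q_i$ and $q_j$ replicate $p_i$ and $p_j$ for everybody. Then every alternative in $S^c$ is comparable relative to $S$ under both $R_{\I}$ and $R_{\II}$. Applying \ref{ch1:IICA} with $C=S^c$ gives $p_i \succcurlyeq^{(S,R_{\II})}_* p_j$.

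\emph{Step 3 (swap in the copies).} Set $S_{\I}:=S\cup\{q_i,q_j\}$. By \ref{ch1:MI}, the evaluations on $S_{\I}$ and on $S$ agree on $S$, so $p_i\succcurlyeq^{(S_{\I},R_{\II})}_* p_j$. By \ref{ch1:SP}, $p_i\sim_* q_i$ and $p_j\sim_* q_j$, so \ref{ch1:RA} gives $q_i\succcurlyeq^{(S_{\I},R_{\II})}_* q_j$. Set $S_{\II}:=S_{\I}\setminus\{p_i,p_j\}$. By \ref{ch1:MI}, $q_i\succcurlyeq^{(S_{\II},R_{\II})}_* q_j$.

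\emph{Step 4 (relabel the originals).} Let $R_{\III}$ agree with $R_{\II}$ on $S_{\II}$, with $p_i\sim^{R_{\III}}_k q_j$ and $p_j\sim^{R_{\III}}_k q_i$ for all $k$, and all remaining outside alternatives defined as in Step 2 relative to $S_{\II}$. Under both $R_{\II}$ and $R_{\III}$, every alternative in $S_{\II}^c$ is comparable relative to $S_{\II}$: $p_i$ and $p_j$ are exact copies of $q_i$ or $q_j$, and the other alternatives were built from lotteries using $q_i,q_j$ in place of $p_i,p_j$. Hence \ref{ch1:IICA} with $C=S_{\II}^c$ yields $q_i\succcurlyeq^{(S_{\II},R_{\III})}_* q_j$.

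\emph{Step 5 (return to $(S,R)$).} By \ref{ch1:MI} between $S_{\I}$ and $S_{\II}$, $q_i\succcurlyeq^{(S_{\I},R_{\III})}_* q_j$. Then \ref{ch1:SP} and \ref{ch1:RA} give $p_j\succcurlyeq^{(S_{\I},R_{\III})}_* p_i$, and \ref{ch1:MI} gives $p_j\succcurlyeq^{(S,R_{\III})}_* p_i$.

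\emph{Closing check.} Finally, $R_{\III}$ must agree with $R$ on $S$. Each individual $k$'s valuation of $p_i$ and $p_j$ under $R_{\III}$ is the valuation under $R_{\I}$ of $p_j$ and $p_i$ respectively. Since $R_{\I}$ swaps $i$ and $j$ and the state is polar, this reproduces $R$ on $S$. Here we use that the remaining $p_l$ are unchanged and that $i$ and $j$ are both strict, so the swapped preferences are indeed the original polar ones up to affine normalization. One more application of \ref{ch1:IICA} with $C=S^c$, using the standing hypothesis that all of $S^c$ is comparable relative to $S$ under $R$, yields $p_j\succcurlyeq^{(S,R)}_* p_i$. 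Together with the assumption $p_i \succcurlyeq^{(S,R)}_* p_j$, this gives $p_i\sim^{(S,R)}_* p_j$.

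The main obstacle is the bookkeeping in Steps 2 and 4. The alternatives outside the current menu must be assigned preferences that keep them comparable relative to that menu under both profiles in each \ref{ch1:IICA} application, and this has to be arranged at the moment $p_i,p_j$ are removed from the menu. I would handle this as indicated: define all non-copy outside alternatives only through lotteries that avoid $p_i,p_j$ and use $q_i,q_j$ instead.
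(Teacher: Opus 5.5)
Your proof is correct and is essentially the paper's argument: the paper says only to rerun the proof of Proposition \ref{ch1:proppolar} with IICA in place of IIA (with MI, which implies MICA, handling the menu changes), and your Steps 1--5 do exactly that while supplying the comparability bookkeeping the paper leaves implicit. In the closing step you cite comparability only under $R$, but IICA also needs it under $R_{\III}$; it holds because $q_i\sim^{R_{\III}}_k p_j$ and $q_j\sim^{R_{\III}}_k p_i$ for all $k$, so any lottery over $S_{\II}$ can be rewritten as one over $S$ (for the same reason, your special choice in Step 2 of lotteries avoiding $p_i,p_j$ is harmless but not actually needed).
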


The proof is nearly identical to the proof of Proposition \ref{ch1:proppolar}, with the only difference that  \ref{ch1:IICA} is used instead of \ref{ch1:IIA}. This is possible as Proposition \ref{ch1:proppolar2} is restricted to polar states where all alternatives outside are comparable.\\
\ \\
\textit{Proof of Theorem \ref{ch1:SCE}} \nopagebreak[4] \\
Let \ref{ch1:RA}, \ref{ch1:SP}, \ref{ch1:AN}, \ref{ch1:IICA} and \ref{ch1:MI} be satisfied and assume that $|A|\geq 2n+4$. We show that for each $(S,R)\in\Omega$, $\succcurlyeq^{(S,R)}_*$ is represented by $$\sum_{i\in N} u_i^{A,R}.$$ Note that it suffices to show that this holds for $S=A$, as by \ref{ch1:MI} the same representation must hold for any $S\subset A$. Furthermore, note that for $S=A$, the representations of both theorems coincide. Hence, we follow the proof of Theorem \ref{ch1:MCE}, with the caveat that $R_{\II}$ is selected such that every alternative in $(P \cup S_{\III})^c$ is comparable relative to $(P \cup S_{\III})$ under $R_{\II}$. Then one can simply replace the use of Proposition \ref{ch1:proppolar} with Proposition \ref{ch1:proppolar2} and the use of \ref{ch1:IIA} with \ref{ch1:IICA} for the proof to go through. This concludes the proof of Theorem \ref{ch1:SCE}.\\
\ \\
\textit{Proof of Proposition \ref{ch1:propsuff}} \nopagebreak[4] \\
We prove the proposition by providing a counterexample. Specifically, we identify an aggregation rule that satisfies the axioms but is not represented by the normalized sum of individual utilities across all states. We begin with a counterexample for the representation of Theorem \ref{ch1:MCE}. Assume that \ref{ch1:RA}, \ref{ch1:SP}, \ref{ch1:AN}, \ref{ch1:IIA} and \ref{ch1:MICA} are satisfied. Fix a state $(A,\hat R)\in \Omega$ where no alternative is comparable relative to the remaining alternatives under $\hat R$. As $|A|<2n+1$, such a state must exist. Let $\pi(\hat R)$ denote the set containing all permutations of $\hat R$ and $\hat R$ itself and let $\hat \Omega:=\{(S,R)\in \Omega:S=A, R \in \pi(\hat R)\}$. Now consider an aggregation rule where $\succcurlyeq_*^{(S,R)}$ is represented by \begin{equation}\label{ch1:suffrep1}
\sum_{i\in N} \left(\sum_{a\in S} u^{S,R}_i(a)\right)u^{S,R}_i
\end{equation} whenever $(S,R)\in \hat \Omega$ and by \begin{equation}\label{ch1:suffrep2}\sum_{i\in N} u^{S,R}_i\end{equation} whenever $(S,R)\notin \hat \Omega$. Note that it could be that for all $(S,R)\in \hat \Omega$, (\ref{ch1:suffrep1}) and (\ref{ch1:suffrep2}) are positive affine transformations of each other. So assume that $(A,\hat R)$ has been selected such that this is not the case, which is possible by the richness of $\Omega$. If this aggregation rule indeed satisfies our axioms, we have produced a counterexample. Note that \ref{ch1:AN}, \ref{ch1:MICA} and \ref{ch1:IIA} \textit{connect} states, meaning they impose restrictions between states, whereas \ref{ch1:RA} and \ref{ch1:SP} impose restrictions on each state separately. So to prove that no axiom is violated, we will show that (i) no axiom connects a state in $\hat \Omega$ to a state outside of $\hat \Omega$, (ii) (\ref{ch1:suffrep1}) satisfies the restrictions imposed between any two states in $\hat \Omega$, and (iii) (\ref{ch1:suffrep1}) satisfies \ref{ch1:RA} and \ref{ch1:SP} for each $(S,R)\in\hat \Omega$. First, $\hat \Omega$ has been constructed such that \ref{ch1:AN} doesn't connect any state in $\hat \Omega$ to a state outside of $\hat \Omega$. \ref{ch1:IIA} doesn't connect any state in $\hat \Omega$ to another state, as there are no alternatives outside the menu. \ref{ch1:MICA} doesn't connect any state in $\hat \Omega$ to another state, as no alternative is comparable relative to the other alternatives in $A$ under any $R\in\pi(\hat R)$. Second, (\ref{ch1:suffrep1}) satisfies \ref{ch1:AN} as the weight on each utility function only depends on the utility function itself but not on the index. Third, (\ref{ch1:suffrep1}) assigns positive weights to all individual utility functions that are not indifferent on $A$. Hence, \ref{ch1:RA} and \ref{ch1:SP} are satisfied for each $(S,R)\in \hat \Omega$. This concludes the proof of Proposition \ref{ch1:propsuff} in case of Theorem \ref{ch1:MCE}.

Next, we provide a counterexample for the representation of Theorem \ref{ch1:SCE}. Let $\hat \Omega$ be defined as above. Now consider an aggregation rule where $\succcurlyeq_*^{(S,R)}$ is represented by \begin{equation}\label{ch1:suffrep3}
\sum_{i\in N} \left(\sum_{a\in S} u^{A,R}_i(a)\right)u^{A,R}_i
\end{equation} whenever $(S,R)\in \hat \Omega$ and by \begin{equation}\label{ch1:suffrep4}\sum_{i\in N} u^{A,R}_i\end{equation} whenever $(S,R)\notin \hat \Omega$. As before, assume that $(A,\hat R)$ has been selected such that (\ref{ch1:suffrep3}) and (\ref{ch1:suffrep4}) are not positive affine transformations of each other. Note that both \ref{ch1:MI} and \ref{ch1:IICA} connect states. As before, it holds that (i) no axiom connects a state in $\hat \Omega$ to a state outside of $\hat \Omega$, (ii) (\ref{ch1:suffrep3}) satisfies the restrictions imposed between any two states in $\hat \Omega$, and (iii) (\ref{ch1:suffrep3}) satisfies \ref{ch1:RA} and \ref{ch1:SP} for each $(S,R)\in\hat \Omega$. This can be shown, similarly to how it was shown for Theorem \ref{ch1:MCE} above. Just note that \ref{ch1:IICA} doesn't connect any state in $\hat \Omega$ to another state, as no alternative is comparable relative to the other alternatives in $A$ under any $R\in\pi(\hat R)$. This concludes the proof of Proposition \ref{ch1:propsuff} in the case of Theorem \ref{ch1:SCE}.

\section{}\label{ch1:appendix infinite alternatives}

In this section, we consider the case where $A$ is either countably or uncountable infinite. This requires us to make some adjustments to the framework and the axioms. First, individual utilities might not be bounded, in which case they cannot be normalized as in the representations of Theorems \ref{ch1:MCE} and \ref{ch1:SCE}. We deal with this by introducing a domain restriction, namely we only impose axioms on states where each individual has a best and worst alternative in $A$. Formally, we define $\Omega$ to be the set of states, such that for each $(S,R)\in\Omega$, both $\left\{a\in A:a\succcurlyeq^R_i b \text{ for all }b\in A\right\}$ and $\left\{a\in A:b\succcurlyeq^R_i a \text{ for all }b\in A\right\}$ are non-empty for all $i\in N$. Second, even if individual preferences have a best and worst alternative in $A$, they might not have one for $S\subset A$. For example, if $A=[0,1]$ and $u^R_i(a)=a$ for some $R\in \mathcal{R}^n$ and $i\in N$, then $\succcurlyeq^R_i$ does not have a best alternative in $S=[0,1)$. The non-existence of a best or worst alternative for an individual in a given menu means that we cannot construct the specific polar state required for the proof of Theorem \ref{ch1:MCE}. Therein, every alternative in the menu must be comparable relative to the set of polar alternatives and every polar alternative must be comparable relative to the menu. For the proof to go through, we introduce a weaker notation of comparability.

\begin{definition}
$a\in A$ is \textit{approximately comparable} relative to $B \subseteq A$ under $R\in\mathcal{R}^n$ if $a\notin B$ and for every $i\in N$ and $\varepsilon \in (0,1)$ there exists $x_{i,\varepsilon},y_{i,\varepsilon}\in \triangle B$ such that $$(1-\varepsilon)[a]+\varepsilon x_{i,\varepsilon} \sim^R_i y_{i,\varepsilon}.$$
\end{definition}

Note that if $a$ is comparable relative to $B$ under $R$, then $a$ is approximately comparable relative to $B$ under $R$. If $A$ is finite, the two concepts coincide. Furthermore, if $a$ is approximately comparable, its utility must lie between the supremum and infimum utility in the set for every individual, as shown by the following lemma.

\begin{lemma}\label{ch1:lemmaapprox}
For any $R\in\mathcal{R}^n$ and $B \subseteq A$, if $a\notin B$ and for any utility profile $(u^R_i)_{i\in N}$ of $R$, $$\sup_{b\in B} u_i^R(b) \geq u_i^R(a) \geq \inf_{b\in B} u_i^R(b)$$ for all $i\in N$ then $a$ is approximately comparable relative to $B$ under $R$.
\end{lemma}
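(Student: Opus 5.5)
Fix $R$, $B$, and $a\notin B$ satisfying the hypothesis, and fix $i\in N$ and $\varepsilon\in(0,1)$. Since $\succcurlyeq^R_i$ is a vNM preference, take a vNM representation $u:=u^R_i$. The hypothesis then reads $\sup_B u\ge u(a)\ge\inf_B u$. I must find $x:=x_{i,\varepsilon}$ and $y:=y_{i,\varepsilon}$ in $\triangle B$ with
$$(1-\varepsilon)u(a)+\varepsilon u(x)=u(y).$$
Write $s=\sup_B u$ and $t=\inf_B u$. The strategy is to exploit the slack $\varepsilon$ so that we only ever need points of $B$ close to the supremum and infimum, not points attaining them.

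First dispose of the degenerate case $s=t$. Then $u(a)=s=t$, so $u$ is constant on $B\cup\{a\}$. Any choice $x=y=[b]$ with $b\in B$ works, because both sides equal $u(b)$.

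Now assume $s>t$. By the domain restriction of this appendix, every individual has a best and a worst alternative in $A$, so $s$ and $t$ are finite. Choose $\delta>0$ small, to be fixed below, and pick $b^+,b^-\in B$ with $u(b^+)>s-\delta$ and $u(b^-)<t+\delta$. With $\delta<(s-t)/2$ we get $u(b^+)>u(b^-)$.

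Take $x$ to be either $[b^+]$ or $[b^-]$, and take $y$ to be a mixture $\mu[b^+]+(1-\mu)[b^-]$. By linearity of $u$ over lotteries, $u(y)$ ranges continuously over the interval $[u(b^-),u(b^+)]$ as $\mu$ ranges over $[0,1]$. By the intermediate value theorem it therefore suffices to show that
$$v:=(1-\varepsilon)u(a)+\varepsilon u(x)\in[u(b^-),u(b^+)].$$
Split on where $u(a)$ lies.
\begin{itemize}
\item If $u(a)\ge u(b^+)$, set $x=[b^-]$. Then $v\le u(a)\le s$, which alone does not give $v\le u(b^+)$. Instead bound it more carefully: $v\le(1-\varepsilon)s+\varepsilon(t+\delta)$, and this is at most $s-\delta$ exactly when $\delta(1+\varepsilon)\le\varepsilon(s-t)$. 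On the other side, $v\ge(1-\varepsilon)u(b^+)+\varepsilon u(b^-)\ge u(b^-)$.
\item If $u(a)\le u(b^-)$, the argument is symmetric, using $x=[b^+]$.
\item If $u(a)$ lies strictly between $u(b^-)$ and $u(b^+)$, any $x\in\{[b^+],[b^-]\}$ works, because $v$ is then a convex combination of two numbers in the interval.
\end{itemize}
Choosing $\delta<\min\{(s-t)/2,\ \varepsilon(s-t)/(1+\varepsilon)\}$ makes all cases valid.

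The only delicate point is the first case: the supremum may not be attained, so $u(a)$ may exceed every value on $B$. The $\varepsilon$-weight on $x=[b^-]$ is exactly what pulls $v$ below $s-\delta$. So the main step is this choice of $\delta$ depending on $\varepsilon$; everything else is the intermediate value theorem applied to mixtures.
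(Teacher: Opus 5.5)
Your proof is correct, at least when $s$ and $t$ are finite (see the caveat below). It runs on the same mechanism as the paper's: the $\varepsilon$-weight on $x$ pulls $(1-\varepsilon)u(a)+\varepsilon u(x)$ into the set $\{u(y):y\in\triangle B\}$, which contains $(t,s)$ and is swept out by two-point mixtures.

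What differs is the order of choices. The paper splits on whether $u(a)$ lies strictly inside $(t,s)$ or equals $s$ or $t$. In the interior case it takes $x=y=z$ with $u(z)=u(a)$, without using $\varepsilon$. In the boundary case it fixes $x$, notes that the left-hand side lands in the attainable range, and only then picks $y$. Because $y$ is chosen after the target value is known, no $\delta$ tied to $\varepsilon$ is needed.

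You fix the approximants $b^{\pm}$ first. That costs you the inequality $\delta(1+\varepsilon)\le\varepsilon(s-t)$, but it gives one explicit two-point family for every case. It also makes you tidier at the edges. The paper's claim that the left-hand side is \emph{strictly} between $\inf$ and $\sup$ fails when $s=t$, or when the supremum is attained and the arbitrary $x$ attains it. The conclusion survives there, but your separate degenerate case and your specific choice $x=[b^-]$ avoid the issue.

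One caveat: you get finiteness of $s$ and $t$ from the domain restriction on $\Omega$, but the lemma quantifies over all $R\in\mathcal{R}^n$. With $A$ infinite, a vNM utility can be unbounded on $B$. Then $s-\delta=+\infty$ and your choice of $b^+$ with $u(b^+)>s-\delta$ is impossible. This does not matter for the lemma's use, since the profiles $R_{\I}$ and $R_{\II}$ in the proof of Theorem~\ref{ch1:MCE*} must lie in $\Omega$. It is also easy to repair. In an unbounded direction, choose the approximant beyond $u(a)$ and far enough out; for instance, when only $s=+\infty$, take $u(b^+)>\max\{u(a),\,t+\delta/\varepsilon\}$. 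This makes the corresponding boundary case vacuous and the other one go through. Alternatively, choosing $y$ after computing the target value, as the paper does, needs no boundedness at all.
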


\begin{proof}
Fix $(u^R_i)_{i\in N}$. Note that $a\in A$ is approximately comparable relative to $B \subseteq A$ under $R\in\mathcal{R}^n$ if and only if for every $i\in N$ and $\varepsilon \in (0,1)$ there exists $x_{i,\varepsilon},y_{i,\varepsilon}\in \triangle B$ such that \begin{equation}\label{ch1:convergenve} (1-\varepsilon)u_i^R(a) + \varepsilon u_i^R(x_{i,\varepsilon})=u_i^R(y_{i,\varepsilon}).\end{equation} If $u_i^R(a)$ is strictly between $\sup_{b\in B} u_i^R(b)$ and $\inf_{b\in B} u_i^R(b)$, one can simply select a $z_i \in \triangle B$ such that $u_i^R(z_i)=u_i^R(a)$ and then set $x_{i,\varepsilon}=y_{i,\varepsilon}=z_i$. Then (\ref{ch1:convergenve}) is satisfied for all $\varepsilon$. So assume $u_i^R(a)=\sup_{b\in B} u_i^R(b)$ and fix $\varepsilon$. Choose $x_{i,\varepsilon}$ arbitrarily. The left-hand side of (\ref{ch1:convergenve}) is strictly between $\sup_{b\in B} u_i^R(b)$ and $\inf_{b\in B} u_i^R(b)$, and hence there must be a $y_{i,\varepsilon}\in\triangle B$ to satisfy (\ref{ch1:convergenve}). The same argument applies when $u_i^R(a)=\inf_{b\in B} u_i^R(b)$.
\end{proof}

In \ref{ch1:MICA}, comparability is then replaced by approximate comparability. 

\begin{axiomp}{MICA*}\label{ch1:MICA*}
For each $(S,R)\in \Omega$ and $S'\subseteq S$ where every $a\in S\setminus S'$ is approximately comparable relative to $S'$, $\succcurlyeq^{(S,R)}_*$ and $\succcurlyeq^{(S',R)}_*$ agree on $S'$.
\end{axiomp}

With these adjustments, we can now state the equivalent of Theorem \ref{ch1:MCE} when $A$ is infinite. For any $R\in \mathcal{R}^n$ and $B \subseteq A$, let $\hat u^{B,R}_i$ denote the representation of $\succcurlyeq^R_i$ where $\sup_{a\in B} \hat u_i^{B,R}(a) = 1$ and $\inf_{a\in B} u_i^{B,R}(a) = 0$, unless $\succcurlyeq^R_i$ is indifferent on $B$ in which case $\hat u_i^{B,R}(a) = 0$ for all $a\in B$.

\begin{theorem}\label{ch1:MCE*}
Let $A$ be infinite. An aggregation rule $\succcurlyeq_*$ satisfies \ref{ch1:RA}, \ref{ch1:SP}, \ref{ch1:AN}, \ref{ch1:IIA} and \ref{ch1:MICA*} if and only if for each $(S,R)\in \Omega$, $\succcurlyeq^{(S,R)}_*$ is represented by $$\sum_{i\in N} \hat u_i^{S,R}.$$
\end{theorem}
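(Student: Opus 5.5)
The plan is to adapt the proof of Theorem \ref{ch1:MCE} to infinite $A$ and handle separately the two features that break in the finite argument: suprema and infima over $S$ need not be attained, and the counting assumption $|A|\geq 2n+4$ is replaced by $A$ being infinite.

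\textbf{Sufficiency.} We check each axiom for $\sum_{i\in N}\hat u_i^{S,R}$. The rule is an expected utility, so \ref{ch1:RA} holds. Each $\hat u_i^{S,R}$ is a positive affine transform of $u_i^R$ or is constant, so \ref{ch1:SP} holds. \ref{ch1:AN} is immediate, and \ref{ch1:IIA} holds because the normalization uses only preferences on $S$. For \ref{ch1:MICA*}, the key point is that approximate comparability of $a$ relative to $S'$ forces $u_i^R(a)\in[\inf_{S'}u_i^R,\sup_{S'}u_i^R]$. To see this, rearrange $(1-\varepsilon)u_i(a)+\varepsilon u_i(x)=u_i(y)$: if $u_i(a)>\sup_{S'}u_i$, then $u_i(y)-u_i(a)\geq -\varepsilon(u_i(a)-\inf)$, which contradicts $u_i(y)\leq \sup$ for small $\varepsilon$ when utilities on $S'$ are bounded. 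Boundedness holds because of the domain restriction on $\Omega$. Hence removing such alternatives changes no $\sup$ or $\inf$, and $\hat u^{S,R}_i$ and $\hat u^{S',R}_i$ coincide on $S'$.

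\textbf{Necessity, setup.} Proposition \ref{ch1:propadditive} only involves finitely many alternatives at a time. By Lemma \ref{ch1:lemmathree}, it suffices to show that for any distinct $b,c,d\in S$ the restriction of $\succcurlyeq_*^{(S,R)}$ to $\triangle\{b,c,d\}$ is represented by $\sum_i \hat u_i^{S,R}$. Applying Proposition \ref{ch1:propadditive} to a finite auxiliary menu then gives linear weights. Proposition \ref{ch1:proppolar} carries over verbatim, since it only needs two fresh alternatives $q_i,q_j$, which exist because $A$ is infinite, and comparability implies approximate comparability.

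\textbf{Necessity, main construction.} If $S^c$ is finite, first drop from $S$ infinitely many alternatives other than $b,c,d$. These are interior alternatives whose utilities lie strictly between the sup and inf, or equal to ones attained elsewhere; this is possible, and it uses \ref{ch1:MICA*} together with Lemma \ref{ch1:lemmaapprox}. This frees infinitely many outside alternatives. Next, use \ref{ch1:IIA} to pass to a profile $R_{\II}$ that agrees with $R$ on the (reduced) menu and has a polar set $P\subseteq S^c$. Here $p_i$ is valued by $i$ at exactly $\sup_S u_i$ and by every $j\neq i$ at exactly $\inf_S u_j$, and individuals indifferent on $S$ are indifferent on $P$. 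In the finite proof one shrinks the menu to at most $n+4$ elements. The infinite analogue is as follows. By Lemma \ref{ch1:lemmaapprox}, every element of $S$ is approximately comparable relative to $P\cup\{b,c,d\}$, since its utilities lie in $[\inf,\sup]=[u_i(p_j),u_i(p_i)]$. So \ref{ch1:MICA*} lets us pass from $(S\cup P,R_{\II})$ to $(P\cup\{b,c,d\},R_{\II})$, with agreement on $\{b,c,d\}$. Conversely, $P$ is approximately comparable relative to $S$, again by Lemma \ref{ch1:lemmaapprox}, which links $(S,R_{\II})$ to $(S\cup P,R_{\II})$. The same reasoning shows $b,c,d$ are approximately comparable relative to $P$, so removing them gives the polar state $(P,R_{\II})$, where Proposition \ref{ch1:proppolar} yields indifference among the $p_i$. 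On the finite menu $P\cup\{b,c,d\}$, Proposition \ref{ch1:propadditive} gives weights $\lambda_0,\lambda_i$. Polar indifference forces equal $\lambda_i$ for the non-indifferent individuals, and \ref{ch1:SP} gives $\lambda>0$ by the same three-case argument as before. Normalizing $u_i$ on $P\cup\{b,c,d\}$ reproduces $\hat u_i^{S,R}$ on $\{b,c,d\}$, because $p_i$ and $p_j$ realize the sup and inf. Chaining the agreements with Lemma \ref{ch1:lemmaagree} finishes the argument; the case $|S|=2$ is handled as before.

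\textbf{Main obstacle.} The delicate step is the first reduction when $S^c$ is empty or too small. We must remove alternatives from $S=A$ using \ref{ch1:MICA*} without destroying the sup and inf of any individual on what remains, and while keeping $b,c,d$. My first attempt is to keep a countable subset containing $b,c,d$ and, for each $i$, sequences approaching $\sup_S u_i$ and $\inf_S u_i$. Every removed alternative is then approximately comparable by Lemma \ref{ch1:lemmaapprox}, and infinitely many alternatives are freed as long as the kept set is chosen with co-infinite complement. If $A$ is countable, I will split it into two infinite parts, keeping in one part $b,c,d$ and the approximating sequences.
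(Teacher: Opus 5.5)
\textbf{Verdict.} Your proposal is correct. It follows a genuinely different route from the paper's proof of this theorem.

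\textbf{The paper's route.} The paper only sketches this proof. It reuses the finite construction: a ``utopia'' alternative $e$ placed at $\sup_{S_{\I}}$ for everyone, the reduced menu $S_{\III}$, and polar alternatives at the sup and inf of $S_{\III}$. Lemma \ref{ch1:lemmaapprox} supplies approximate comparability. Proposition \ref{ch1:propadditive} is then applied to $P\cup S_{\III}$. That menu may now be infinite, so the paper cites Mandler (2005) for an infinite version of Harsanyi's theorem. It does not spell out how $S_{\III}$ is chosen when extremes are not attained, how room for $P$ is found when $S^c$ is finite, or why the representation satisfies the axioms.

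\textbf{Your route.} You do two things differently. First, you drop $e$ and use the infinitude of $A$ to thin the menu while preserving every individual's sup and inf, which frees outside alternatives for $P$. Second, you collapse to the finite menu $P\cup\{b,c,d\}$. Because $P$ attains each individual's sup and inf (for $n\geq 2$), every other alternative is actually comparable, not just approximately comparable, relative to $P\cup\{b,c,d\}$. So the finite Proposition \ref{ch1:propadditive} suffices, and the argument stays self-contained without Mandler. Your sufficiency check is the converse of Lemma \ref{ch1:lemmaapprox}: approximate comparability plus boundedness forces utilities into $[\inf,\sup]$, so normalizations are unchanged. This fills a step the paper leaves implicit.

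\textbf{Minor points to tidy.}
\begin{itemize}
\item For $n=1$, $P$ does not contain the infimum, so your collapse step does not apply. That case follows directly from \ref{ch1:RA} and \ref{ch1:SP}.
\item Every auxiliary profile must lie in the restricted $\Omega$. This covers your $R_{\II}$ and the profiles used inside Proposition \ref{ch1:proppolar}. You can ensure it by giving all remaining alternatives utilities in $[\inf,\sup]$.
\item When $S$ is countable, a fixed split may not work, because the approximating sequences could cover almost all of $S$. Instead, interleave: at each step keep one new element from each approximating sequence and discard one new element of $S$. This yields an infinite discarded set while each kept sequence stays infinite.
\end{itemize}
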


\begin{proof}
First off, note that the proofs of Proposition \ref{ch1:proppolar}, Lemma \ref{ch1:lemmaagree} and Lemma \ref{ch1:lemmathree} go through when there are infinitely many possible alternatives. For a proof of Proposition \ref{ch1:propadditive} under infinite $A$ we refer to \textcite{mandler2005harsanyi}. Now consider the proof of Theorem \ref{ch1:MCE}. When constructing the sequence of states, specifically $R_{\I}$ and $R_{\II}$, there might not be a best or worst alternative in the menu for some individuals. We make the following adjustments to the proof. Let $R_{\I} \in \mathcal{R}^n$ denote a preference profile that agrees with $R$ on $S_{\I}$ and where there is an $e\in S^c_{\I}$ such that for each $i\in N$ and any $u_i^{R_{\I}}$ representing $\succcurlyeq_i^{R_{\I}}$, $u_i^{R_{\I}}(e) = \sup_{a\in S_{\I}} u_i^{R_{\I}}(a)$. Let $R_{\II} \in \mathcal{R}^n$ denote a preference profile that agrees with $R_{\I}$ on $S_{\III}$ and where there exists $P \subseteq S_{\III}^c$ such that (i) $(P,R_{\II})$ is polar, where $p_i\in P$ denotes $i$'s polar alternative, (ii) for each $i\in N$, $i\in N_\succ^{(P, R_{\II})}$ if and only if $i\in N_\succ^{(S_{\III}, R_{\II})}$ and (iii) for each $i\in N$ and any $u_i^{R_{\II}}$ representing $\succcurlyeq_i^{R_{\II}}$, $u_i^{R_{\II}}(p_i) = \sup_{a\in S_{\III}} u_i^{R_{\II}}(a)$ and $u_i^{R_{\II}}(p_j) = \inf_{a\in S_{\III}} u_i^{R_{\II}}(a)$ for all $j\in N \setminus \{i\}$. By Lemma \ref{ch1:lemmaapprox}, $e$ is approximately comparable to $S_{\I}$ under $R_{\I}$ and every alternative in $P$ is approximately comparable relative to $S_{\III}$ under $R_{\II}$. Then the proof of Theorem \ref{ch1:MCE} goes through. 
\end{proof}

Theorem \ref{ch1:SCE} holds without adjustment to the axioms.

\section{}\label{ch1:appendix droping axioms}

In this section, we show that none of the axioms can be dispensed with. We do so by separately dropping each axiom and then identifying a representation that satisfies the remaining axioms, different from the representation of the respective theorem.

When \ref{ch1:SP} is dropped in either theorem, total indifference of the planner satisfies the remaining axioms. Formally, for each $(S,R)\in\Omega$ and $x\in\triangle S$, $u_*^{(S,R)}(x)=0$. When \ref{ch1:AN} is dropped in Theorem \ref{ch1:MCE}, any weighted sum of individual utilities satisfies the remaining axioms as long as the weights are strictly positive. Formally, there exists $(\lambda_i)_{i\in N} \in \mathbb{R}_+^n$ such that for each $(S,R)\in \Omega$, $u_*^{(S,R)}(x)= \sum_{i=1}^n \lambda_i u^{S,R}_i(x)$. Similarly, when \ref{ch1:AN} is dropped in Theorem \ref{ch1:SCE}, $u_*^{(S,R)}(x)= \sum_{i=1}^n \lambda_i u^{A,R}_i(x)$ satisfies the remaining axioms. If \ref{ch1:RA} is dropped in either theorem, then the relative egalitarian rule by \textcite{sprumont2013relative} satisfies the remaining axioms. The rule ranks alternatives according to the lowest normalized individual utility it generates. If two alternatives generate the same lowest utility, they are further ranked by the second lowest utility, and so on. Depending on the theorem for which we want to provide the counterexample, individual utilities are normalized either with respect to $S$ or $A$. When \ref{ch1:IIA} is dropped in Theorem \ref{ch1:MCE}, then the representation of Theorem \ref{ch1:SCE} satisfies the remaining axioms. Similarly, when \ref{ch1:MI} is dropped in Theorem \ref{ch1:SCE}, the then representation of Theorem \ref{ch1:MCE} satisfies the remaining axioms. For a representation when either \ref{ch1:MICA} is dropped in Theorem \ref{ch1:MCE} or \ref{ch1:IICA} is dropped in Theorem \ref{ch1:SCE} we refer to Propositions \ref{ch1:prop minimal suff sce} and \ref{ch1:prop minimal suff mce} in Appendix \ref{ch1:appendix minimal sufficiency}.

\section{}\label{ch1:appendix minimal sufficiency}

In this section we consider weaker versions of our context-defining axioms \ref{ch1:MICA} and \ref{ch1:IICA}. The following notion of redundant alternatives has been considered by the literature.

\begin{definition} $a\in A$ is \textit{redundant} relative to $B \subseteq A$ under $R\in\mathcal{R}^n$ if $a\notin B$ and there exists $x\in\triangle B$ such that $x \sim^R_i [a]$ for all $i\in N$.
\end{definition}

Given this definition, we can state the central axiom of \textcite{dhillon1999relative}. Note that we adapted the axiom to our setting where the menu is explicit.

\begin{axiomp}{IRA}[Independence of Redundant Alternatives]\label{ch1:IRA}
Fix $(S,R)\in \Omega$ such that every $a\in S^c$ is redundant relative to $S$ under $R$. For any $R'\in \mathcal{R}^n$, if $R$ and $R'$ agree on $S$ and every $a\in S^c$ is redundant relative to $S$ under $R'$, then $\succcurlyeq_*^{(S,R')} = \succcurlyeq_*^{(S,R)}$.
\end{axiomp}

Note that \ref{ch1:IRA} differs from \ref{ch1:IICA} in two ways. First, the independence applies to redundant rather than comparable alternatives. Second, \ref{ch1:IRA} applies only if all alternatives outside the menu are redundant, whereas \ref{ch1:IICA} applies when only some alternatives outside the menu are comparable as well. In order to disentangle these difference, we define two intermediate axioms.

\begin{axiomp}{IICA*}\label{ch1:IICA*}
Fix $(S,R)\in \Omega$ such that every $a\in S^c$ is comparable relative to $S$ under $R$. For any $R'\in \mathcal{R}^n$, if $R$ and $R'$ agree on $S$ and every $a\in S^c$ is comparable relative to $S$ under $R'$, then $\succcurlyeq_*^{(S,R')} = \succcurlyeq_*^{(S,R)}$.
\end{axiomp}

\begin{axiomp}{IIRA}[Independence of Irrelevant Redundant Alternatives]\label{ch1:IIRA}
Fix $(S,R)\in \Omega$ and $C\subseteq S^{c}$ such that every $a\in C$ is redundant relative to $C^c$ under $R$. For any $R'\in \mathcal{R}^n$, if $R$ and $R'$ agree on $C^c$ and every $a\in C$ is redundant relative to $C^c$ under $R'$, then $\succcurlyeq_*^{(S,R')} = \succcurlyeq_*^{(S,R)}$.
\end{axiomp}

It is easy to see that \ref{ch1:IICA*} would suffice in Theorem \ref{ch1:SCE}, since whenever \ref{ch1:IICA} is used in the proof, all alternatives outside the menu are comparable. However, replacing \ref{ch1:IICA} with \ref{ch1:IIRA} would not suffice, as demonstrated by the following proposition. Consequently, replacing \ref{ch1:IICA} with \ref{ch1:IRA} wouldn't suffice either, as \ref{ch1:IRA} is weaker than \ref{ch1:IIRA}.

\begin{proposition}\label{ch1:prop minimal suff sce} Let $|A| \geq 2n +4$. There exists an aggregation rule $\succcurlyeq_*$ that satisfies \ref{ch1:RA}, \ref{ch1:SP}, \ref{ch1:AN}, \ref{ch1:IIRA} and \ref{ch1:MI} and $\succcurlyeq_*^{(S,R)}$ can not be represented by $\sum_{i\in N} u^{A,R}_i$ for all $(S,R)\in\Omega$.
\end{proposition}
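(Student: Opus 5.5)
The plan is to build a counterexample that keeps the structure of setting-contingent utilitarianism but replaces the equal weights with anonymous, profile-dependent weights. These weights should be unaffected by redundant alternatives, yet they need not be unaffected by merely comparable ones.

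\textbf{Construction.} For each profile $R$, define a single social utility on $A$ and let every menu $S$ inherit its restriction to $\triangle S$. This makes \ref{ch1:MI} automatic. Write $v(a):=(u^{A,R}_1(a),\dots,u^{A,R}_n(a))\in[0,1]^n$ and $K_R:=\mathrm{conv}\{v(a):a\in A\}$. For every $i$ that is not totally indifferent on $A$, set
$$\lambda_i^R:=1+\max\Big\{\textstyle\sum_{j\in N}u^{A,R}_j(a): a\in A,\ u^{A,R}_i(a)=1\Big\},$$
and set $\lambda_i^R:=1$ otherwise. Let $\succcurlyeq^{(S,R)}_*$ be represented by $\sum_{i\in N}\lambda_i^R u^{A,R}_i$ restricted to $\triangle S$.

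\textbf{Checking the axioms.}
\begin{itemize}
\item \ref{ch1:RA} holds because the representation is an expected-utility representation.
\item \ref{ch1:SP} holds because every weight is at least $1$, and an indifferent individual contributes the constant $0$.
\item \ref{ch1:AN} holds because $\lambda_i^R$ depends only on $i$'s own normalized utility and on the multiset of others' utilities, so permuting preferences permutes the weights along with the utilities.
\item \ref{ch1:IIRA} is the substantive check. Fix $S$ and $C\subseteq S^c$ such that every $a\in C$ is redundant relative to $C^c$ under both $R$ and $R'$, and suppose $R$ and $R'$ agree on $C^c$.
\begin{itemize}
\item Redundancy of $a$ means some $x\in\triangle C^c$ is indifferent to $[a]$ for every individual. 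Hence each $u_i(a)$ lies between the minimum and maximum of $u_i$ on $C^c$. So the normalizations $u^{A,R}_i$ and $u^{A,R'}_i$ are determined by the common restriction to $C^c$, and they coincide on $C^c$.
\item It also follows that $v(a)\in\mathrm{conv}\{v(b):b\in C^c\}$, so $K_R=K_{R'}$.
\item The maximum defining $\lambda_i$ equals the maximum of the linear functional $\sum_j z_j$ over the face $K_R\cap\{z_i=1\}$. Since $v(a)$ lies in the hull of $v(C^c)$, if $u^{A,R}_i(a)=1$ then the points of $C^c$ carrying positive weight in $x$ also have $u_i=1$. Their coordinate sums dominate $\sum_j u_j(a)$ in the sense that $\sum_j u_j(a)$ is a convex combination of them.
\item Hence the maximum is attained on $C^c$, and $\lambda^R=\lambda^{R'}$.
\item Finally, the social utilities agree on $S\subseteq C^c$. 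So $\succcurlyeq_*^{(S,R)}=\succcurlyeq_*^{(S,R')}$.
\end{itemize}
\end{itemize}

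\textbf{Difference from setting-contingent utilitarianism.} Take individuals $1$ and $2$ with normalized utility vectors $a\mapsto(1,\tfrac12)$, $b\mapsto(0,1)$ and $c\mapsto(0,0)$. Let every remaining alternative in $A$ be a copy of $c$ for these two individuals, and let all other individuals be totally indifferent on $A$. Then $\lambda_1=2.5$ and $\lambda_2=2$. On $S=\{a,b,c\}$ the columns $(1,u_1,u_2)$ are affinely independent, so the two representations $\sum_i u^{A,R}_i$ and $\sum_i\lambda_i u^{A,R}_i$ are not positive affine transformations of each other. Concretely, the planner's indifference point between $b$ and lotteries over $\{a,c\}$ differs between the two rules. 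The requirement $|A|\ge 2n+4$ is met by adding copies.

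\textbf{Main obstacle.} I expect the delicate step to be the \ref{ch1:IIRA} verification, specifically that the face-maximum defining $\lambda_i$ ignores redundant points. The argument is that a point in a convex hull lying on the supporting hyperplane $\{z_i=1\}$ is a convex combination of points on that hyperplane. This is also the step where replacing redundancy by comparability breaks down: a comparable alternative can lie outside $K_R$ on the face $\{z_i=1\}$ and raise $\lambda_i$. That is consistent with the rule violating \ref{ch1:IICA}, and so it is consistent with Theorem \ref{ch1:SCE}.
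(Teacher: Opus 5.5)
Your proof is correct, and it shares the paper's architecture: a weighted relative-utilitarian rule $\sum_{i\in N} w_i(R)\,u_i^{A,R}$ whose weights are strictly positive, anonymous and menu-independent. In both versions the weights depend only on the convex hull $K_R$ of normalized utility vectors, which redundant alternatives leave unchanged.

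What differs is the weight functional. The paper, adapting Sprumont, sets $w_i=1+u_i^{A,R}(x^*)$, where $x^*$ maximizes $\prod_{i\in N}(u_i^{A,R}(x)+1)$ over $\triangle A$.
\begin{itemize}
\item Because that maximization ranges over lotteries, its feasible utility set is $K_R$ itself. Invariance under redundant alternatives is therefore a one-liner.
\item However, well-definedness rests on uniqueness of the maximizing utility vector, which follows from strict concavity of $\sum_i\log(z_i+1)$ and which the paper takes as known.
\end{itemize}
Your weights are a finite maximum over pure alternatives.
\begin{itemize}
\item They are trivially well defined.
\item The price is the face argument: a lottery attaining $u_i^{A,R}=1$ is supported on alternatives with $u_i^{A,R}=1$. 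You carry this out correctly.
\end{itemize}

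Your write-up also makes explicit two points the paper glosses over. First, the $A$-normalization itself is unchanged, because each redundant alternative's utilities lie within the range attained on $C^c$. Second, you give a concrete profile with unequal weights, $(\lambda_1,\lambda_2)=(2.5,2)$, whereas the paper only says the $\psi_i$ are not necessarily equal.

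One caveat applies to both arguments: they need $n\geq 2$. For $n=1$, \ref{ch1:SP} forces the planner's ranking to coincide with the single individual's, so every admissible rule is represented by $u_1^{A,R}$.
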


\begin{proof}
The following proof is adapted from \textcite{sprumont2013relative}. We prove the proposition by identifying a representation that satisfies the axioms but is not a positive affine transformation of the normalized sum of individual utility functions. For every $(S,R)\in \Omega$, define $$\Psi(S,R):= \argmax_{x\in \triangle S} \prod_{i\in N} (u^{S,R}_i(x)+1)$$ and note that $\Psi(S,R)$ is non-empty and for every $x,y\in\Psi(S,R)$, $u^{S,R}_i(x)=u^{S,R}_i(y)$ for all $i\in N$. Hence, $\psi_i^{(S,R)}:=u^{S,R}_i(x)+1$ for some $x\in \Psi(S,R)$ is well defined. Now consider the following representation of $\succcurlyeq_*^{(S,R)}$, $$\sum_{i\in N} \psi^{(A,R)}_i u^{A,R}_i.$$ As $\psi^{(A,R)}_i$ is not necessarily the same for all $i\in N$, the representation is not necessarily a positive affine transformation of $\sum_{i\in N} u^{A,R}_i$. What remains to show is that the representation satisfies the axioms stated in the proposition. \ref{ch1:RA} and \ref{ch1:SP} are satisfied as the representation is a weighted sum of individual utility functions with strictly positive weights. \ref{ch1:AN} is satisfied as any individual's weight does not depend on the individual's index. \ref{ch1:MI} is satisfied as the weights do not depend on the menu. Finally, to see that \ref{ch1:IIRA} is satisfied, assume that for $(S,R)\in \Omega$ there exists a set of  redundant alternatives $C\subseteq S^c$. Now consider $R'\in \mathcal{R}^n$ where $R'\neq R$, $R$ and $R'$ agree on $C^c$ and $C$ is still redundant under $R'$. As any redundant alternative has a lottery that yields the same product, redundant alternatives can be ignored in the maximization problem involved in $\Psi(S,R)$. Hence, $(\psi_i^{(A,R)})_{i\in N}=(\psi_i^{(A,R')})_{i\in N}$. Note that \ref{ch1:IICA} is violated as changing individual preferences over comparable alternatives outside the menu can affect the maximization problem and, consequently, the weights.
\end{proof}

Similarly, we can replace the notion of comparable alternatives in \ref{ch1:MICA} with redundant alternatives, resulting in the following axiom.

\begin{axiomp}{MIRA}[Menu Independence of Redundant Alternatives]\label{ch1:MIRA}
For each $(S,R)\in \Omega$ and $S'\subseteq S$ where every $a\in S\setminus S'$ is redundant relative to $S'$, $\succcurlyeq^{(S,R)}_*$ and $\succcurlyeq^{(S',R)}_*$ agree on $S'$.
\end{axiomp}

Analogously, \ref{ch1:MIRA} does not suffice in Theorem \ref{ch1:MCE}.

\begin{proposition} \label{ch1:prop minimal suff mce} Let $|A| \geq 2n +4$. There exists an aggregation rule $\succcurlyeq_*$ that satisfies \ref{ch1:RA}, \ref{ch1:SP}, \ref{ch1:AN}, \ref{ch1:IIA} and \ref{ch1:MIRA} and $\succcurlyeq_*^{(S,R)}$ can not be represented by $\sum_{i\in N} u^{S,R}_i$ for all $(S,R)\in\Omega$.
\end{proposition}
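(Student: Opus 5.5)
We build the counterexample by adapting the construction in the proof of Proposition \ref{ch1:prop minimal suff sce}, which follows \textcite{sprumont2013relative}, to the menu-contingent setting. For every $(S,R)\in\Omega$ we use the Nash-type maximizer $\Psi(S,R):=\argmax_{x\in\triangle S}\prod_{i\in N}(u^{S,R}_i(x)+1)$ and the induced weights $\psi^{(S,R)}_i:=u^{S,R}_i(x)+1$ for $x\in\Psi(S,R)$. These weights are well defined: the product is strictly log-concave in the utility vector, so all maximizers yield the same utility vector. We then let $\succcurlyeq_*^{(S,R)}$ be represented by
$$\sum_{i\in N}\psi^{(S,R)}_i\,u^{S,R}_i .$$
The weights now depend on the menu $S$ and not on $A$. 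We have to check the axioms and show that this rule differs from the representation in Theorem \ref{ch1:MCE}.

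The per-state axioms are immediate. \ref{ch1:RA} holds because the representation is linear in lotteries. \ref{ch1:SP} holds because every $\psi_i\geq 1>0$, and an individual who is indifferent on $S$ has a constant $u^{S,R}_i$ and so does not matter. \ref{ch1:AN} holds because $\psi_i$ depends only on $i$'s own normalized utility and on the symmetric maximization problem, so permuting $R$ permutes the weights along with the utilities. \ref{ch1:IIA} holds because $u^{S,R}_i$ and $\Psi(S,R)$ depend only on the restriction of $R$ to $S$.

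The key step is \ref{ch1:MIRA}. Take $S'\subseteq S$ with every $a\in S\setminus S'$ redundant relative to $S'$. For each such $a$ there is $x_a\in\triangle S'$ with $x_a\sim_i [a]$ for all $i$. Replacing $a$ by $x_a$ in any lottery gives a lottery in $\triangle S'$ with the same utility vector. Hence $\{(u_i(x))_i:x\in\triangle S\}=\{(u_i(x))_i:x\in\triangle S'\}$. In particular each individual's best and worst values on $S$ are attained in $\triangle S'$, and therefore also at pure alternatives of $S'$, by linearity. So $u^{S,R}_i=u^{S',R}_i$ on $S'$. The feasible utility sets coincide, so the maximal products and the weights coincide, $\psi^{(S,R)}=\psi^{(S',R)}$. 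The two representations therefore agree on $S'$.

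The main obstacle is exhibiting a state where this rule is not a positive affine transformation of $\sum_i u^{S,R}_i$. Here we need the normalized Nash point to be asymmetric while at least two individuals are non-indifferent. Take $n$ individuals, let all but individuals 1 and 2 be indifferent on $S$, and choose $S=\{a,b,c\}$ with utilities for individuals 1 and 2 equal to $a=(1,0)$, $b=(0,1)$, $c=(0.9,0.6)$. Since $|A|\geq 2n+4$, such a state exists. The Nash maximizer over the convex hull then gives $\psi_1\neq\psi_2$. The difference lies in comparing $a$ and $b$: under the equal-weights rule of Theorem \ref{ch1:MCE}, $a\sim b$, whereas with weights $\psi_1\neq\psi_2$ we get a strict ranking. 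This shows the rule is not represented by $\sum_i u^{S,R}_i$. One remaining check is that \ref{ch1:MICA} indeed fails, since a comparable but non-redundant alternative can change the feasible utility set and hence the weights; this confirms consistency with Theorem \ref{ch1:MCE} but is not needed for the proposition.
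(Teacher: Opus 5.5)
Your proposal is correct and follows the paper's route. The paper's counterexample is the same rule, represented by $\sum_{i\in N}\psi^{(S,R)}_i u^{S,R}_i$ and verified ``analogously'' to Proposition~\ref{ch1:prop minimal suff sce}; your \ref{ch1:MIRA} argument (redundant alternatives leave the feasible utility set, the normalizations, and hence the weights unchanged) spells out that analogy. You also go beyond the paper by exhibiting an explicit state: the Nash point is $(0.9,0.6)$, so $\psi_1=1.9\neq 1.6=\psi_2$ and $[a]\succ_*[b]$ where equal weights give indifference. The paper only asserts that the weights need not coincide; note that your example, like the proposition itself, implicitly needs $n\geq 2$.
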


A representation of $\succcurlyeq_*^{(S,R)}$ that satisfies the stated axioms is $\sum_{i\in N} \psi^{(S,R)}_i u^{S,R}_i.$ The proof is analogous to that of Proposition \ref{ch1:prop minimal suff sce}.
\end{appendices}

\printbibliography[segment=\therefsegment, heading=subbibintoc]

\end{document}